\newtheorem{theorem}{Theorem}
\newtheorem{lemma}[theorem]{Lemma}
\newtheorem{heuristic}[theorem]{Heuristic}
\theoremstyle{definition}
\newtheorem{definition}[theorem]{Definition}
\theoremstyle{remark}
\newtheorem{remark}[theorem]{Remark}
\numberwithin{equation}{section}
\newcommand{\Orth}[1]{{#1^{\bot}}}
\newcommand{\transp}[1]{{#1^{\top}}}
\newcommand{\Norm}[1]{{\|#1\|}}
\newcommand{\AltNorm}[1]{{\llbracket#1\rrbracket}}
\newcommand{\rank}[1]{{\mbox{rk}(#1)}}
\newcommand{\pot}[1]{{\mbox{Pot}(#1)}}
\newcommand{\Scal}[2]{{\langle #1| #2 \rangle}}
\newcommand{\ZZ}{\mathbb{Z}}
\newcommand{\QQ}{\mathbb{Q}}
\newcommand{\RR}{\mathbb{R}}
\newcommand{\bb}{\mathfrak{b}}
\newcommand{\sig}{\sigma}
\begin{document}

\title{Indefiniteness makes lattice reduction easier}


\author{Antoine Joux}
\address{CISPA Helmholtz Center for Information Security, \\ Stuhlsatzenhaus 5, 66123 Saarbr\"ucken, Germany.}
\email{joux@cispa.de}

\subjclass[2020]{Primary 11H55}

\date{}


\begin{abstract}
  Since the invention of the famous LLL algorithm, lattice reduction
  has been an extremely useful tool in computational number
  theory. By construction, the LLL algorithm deals with lattices
  living in a vector space endowed with a positive definite scalar
  product. However, it seems quite nature to ask about the indefinite
  case, where the scalar product is replaced by an arbitrary quadratic
  form, possibily indefinite. This question was considered
  independently in two lines of work. One by G{\'a}bor Ivanyos and {\'A}gnes
  Sz{\'a}nt{\'o} and one by Denis Simon. Both lead to an algorithm
  that generalizes LLL and whose performance is
  very similar to LLL, i.e. a polynomial-time algorithm that
  approximates the shortest vector within an approximation factor
  exponential in the dimension. Denis Simon achieves an approximation
  factor close to that of LLL under the assumption that no isotropic vectors arise
  during reduction. G{\'a}bor Ivanyos and {\'A}gnes
  Sz{\'a}nt{\'o} show that it is possible to avoid isotropic
  vectors altogether, at the cost of a somewhat worse approximation factor.

  In this paper, we revisit the reduction of indefinite lattices and
  conclude that it can lead to much better reduced representations
  that previously thought. We also conclude that the approximation
  factor depends on the signature of the indefinite lattice rather
  than on its dimension.
\end{abstract}

\maketitle
\section{Introduction}
Lattice reduction has been a core tool in computer science since the
invention of the Lenstra--Lenstra--Lov{\'a}sz (LLL)
algorithm~\cite{lenstra1982factoring} in 
1982. It has a wide range of applications in number theory and
cryptography. Lattice reduction which aims at finding a rather short and
orthogonal basis of a given lattice can be equivalently reformulated
as reducing a positive definite binary quadratic form in several variables. The
two dimensional or bivariate case has been well studied
since Gauss in~\cite[Sectio Quinta]{Gauss1801}. Later, Korkine and
Zolotareff explored the notion of extremal positive definite
forms~\cite{KorkineZ1877}, getting concrete bounds with up to five
variables. However, the algorithmic question remained open until LLL,
for more details, see~\cite{smeets2009history}.

In this paper, we are focusing on a generalisation of lattice
reduction or quadratic form reduction when then positive definiteness
constraint is lifted. This is a natural question to ask and it is not
the first time it is raised. Even in small dimension, the study of
such indefinite form is a classical topic also started by
Gauss. Markoff~\cite{Markoff1879} continued the work of
Korkine-Zolotareff by studying extremal cases of indefinite
forms. Algorithmically, as far as we know, the first
investigation of the idea of generalizing LLL was by G{\'a}bor Ivanyos and {\'A}gnes
Sz{\'a}nt{\'o}, with an article~ \cite{IvanyosSz96} submitted in~1993 and published
in~1996. Another, seemingly independent, study was submitted by Denis
Simon~\cite{Simon05}  in 2003 and published in 2005. Both articles assume that the
lattice is given by its Gram matrix, i.e, a symmetric matrix formed of
the ``scalar products'' of all pairs of vectors. They further suppose that
this matrix is integral and has full rank. Simon performs a simple
adaptation of LLL, by putting absolute values around ``squared-norms'',
i.e. scalar product of a vector with itself, whenever they are used in
a comparison. This is necessary since negative  ``squared-norms'' may
appear. Somewhat surprinsingly,  it is enough to generalize the results of LLL under
the assumption that no orthogonalized vector of norm $0$ is
encountered over the course of the computation. This is required since, such vectors,
called isotropic vectors, would then cause division by zero in the
Gram-Schmidt orthogonalization routine used by LLL. Simon also
proposes number theoretic applications which are based of his
indefinite algorithm in small dimension.

Additionally, Simon shows in the last paragraph of~\cite[Theorem
1.4]{Simon05} that there is a small improvement for the approximation
factor compared to the positive definite case. This comes form sign
changes in the ``squared-norms'' of 
orthogonal vectors. In Remark~1.5, he indicates that having many such
sign changes would yield further improvement.

In their article, Ivanyos and Sz{\'a}nt{\'o} propose an approach very
similar to Simon's. However, they introduced a few modifications in
the algorithm to completly avoid isotropic vectors. The cost of their
modifications is that some specific cases make the approximation
factor of the algorithm become slightly worse. However, as with LLL,
its remains exponential in the dimension.

In this paper, we propose two main contributions. First, from a theory
point of view, we give a more general definition of indefinite
lattices that does not require integral matrices. Second, in our main,
algorithmic part, which again deals with integral matrices, we
introduce a new algorithm which take advantage of the correspondance
between lattices and quadratic forms to outperform preexisting
algorithms. In particular, this facilitates the creation of sign
changes that Simon remarked upon. With also report on a first
implementation of this algorithm and highlight some of its results.

 \section{Preliminaries}
\subsection{Standard lattices and the notion of reduction} 
A {\bf lattice} is a discrete additive subgroup of $\RR^n$, together
with the norm arising from a scalar product on $\RR^n$. Often, one
considers the standard scalar product, but any bilinear symmetric real
positive-definite quadratic form can be used. Note that the notion of
discreteness used above implicitly refers to the topology induced by
the chosen scalar product. In many applications, we consider
subgroups of $\ZZ^n$, together with a scalar product that takes
integral values on $\ZZ^n$. This is simpler to implement since it 
avoids worrying about loss of precision
issues. It also guarantees the discreteness condition.

To specify a lattice, one needs to give both a $\ZZ$-basis of the
subgroup of $\RR^n$ and a bilinear symmetric matrix $G$ of dimension
$n\times n$ that induces
the scalar product. Namely if $\vec{x}$ and $\vec{y}$ are vectors in
$\RR^n$ then we define:
$$
\Scal{\vec{x}}{\vec{y}}_G=\transp{\vec{y}}\cdot G \cdot \vec{x}.
$$
The standard scalar product is induced by letting $G$ be the identity
matrix. When $G$ is clear by context, we lighten the notation by writing
$\Scal{\vec{x}}{\vec{y}}$ instead of $\Scal{\vec{x}}{\vec{y}}_G.$

It is also possible to accept more generality and start from a
generating family of vectors together with a scalar product matrix
$G$. In that situation, we still want to end up with a basis.

In implementation of LLL, the two most common choices are the
following:
\begin{itemize}
\item Assume the use of the standard scalar product ,i.e., $G$ is
  implicitly set to the identity matrix and provide either a basis
  (as in~\cite[Alg.~$2.6.3$]{Cohen2013} or a generating family  (as
  in~\cite[Alg.~$2.6.8$]{Cohen2013}) for the lattice.
\item Explicitly give a Gram matrix and assume that the lattice is
  spanned by the canonical basis, as in~\cite[Alg.~$2.6.7$]{Cohen2013}.
\end{itemize}

However, it is easy to work with a Gram matrix plus a basis or
generating family both given explicitly. For example, this choice is
made in~\cite{espitau2020certified}.

\subsection{Symmetric bilinear forms on real vector spaces}
A {\bf bilinear form} on a finite dimensional real vector space $V$ is a map $\bb$ from
$V\times V$ to $\RR$ such that, for all $\vec{x_1}$, $\vec{x_2}$, $\vec{y_1}$, $\vec{y_2}$
  in $V$ and $\lambda$ in $\RR$ we have:
\begin{align}
  &\bb(\vec{x_1}+\vec{x_2}, \vec{y_1})  =\bb(\vec{x_1}, \vec{y_1})  +  \bb(\vec{x_2}, \vec{y_1})  \\
  &\bb(\vec{x_1}, \vec{y_1}+\vec{y_2}) =\bb(\vec{x_1}, \vec{y_1})  +  \bb(\vec{x_1}, \vec{y_2}) 
                                           \quad \mbox{and} \\
  &\lambda \bb(\vec{x_1}, \vec{y_1})  = \bb(\lambda \, \vec{x_1}, \vec{y_1}) = \bb(\vec{x_1}, \lambda \,\vec{y_1}).
\end{align}

The form $\bb$ is {\bf symmetric} if for all $\vec{x}$, $\vec{y}$ we
have:
\begin{align}
  &\bb(\vec{x}, \vec{y})  =\bb(\vec{y}, \vec{x}).
\end{align}

By abuse of language, we may refer to $\bb(\vec{x}, \vec{y})$ as the
``scalar product'' of $\vec{x}$ and $\vec{y}$ and to $\bb(\vec{x}, \vec{x})$ as the
``squared-norm'' of $\vec{x}$.\\

Let $\bb$ be a symmetric bilinear form on a real vector space $V$ and
let $W$ be an arbitrary subspace of $V$ we define the ${\bf
  orthogonal}$ subspace $\Orth{W}$ as:
$$
\Orth{W}=\{ \vec{x} \in V \ |\  \forall \vec{y}\in W: \bb(\vec{x},\vec{y})=0\}.
$$

The vector space $\Orth{V}$ is called the {\bf kernel} of $\bb$. If
the kernel of $\bb$ is non-trivial, we say that $\bb$ is
{\bf degenerate}. If $\vec{v_1}, \cdots, \vec{v_d}$ is a basis of $V$,
we can represent  $\bb$ in this basis via the real symmetric matrix:
$$
G_{\bb}=\begin{pmatrix}
  \bb(\vec{v}_1,\vec{v}_1) & \bb(\vec{v}_1,\vec{v}_2) & \cdots
  & \bb(\vec{v}_1,\vec{v}_d)\\
   \bb(\vec{v}_2,\vec{v}_1) & \bb(\vec{v}_2,\vec{v}_2) & \cdots
  & \bb(\vec{v}_2,\vec{v}_d)\\
  \vdots &\vdots & \ddots &\vdots\\
   \bb(\vec{v}_d,\vec{v}_1) & \bb(\vec{v}_d,\vec{v}_2) & \cdots
  & \bb(\vec{v}_d,\vec{v}_d)
\end{pmatrix}.
$$
The form $\bb$ is non-degenerate if and only if $G_{\bb}$ has full
rank.

A vector $\vec{x}$ in $V$ is said to be {\bf isotropic} for $\bb$ when
$\bb(\vec{x},\vec{x})=0.$ The quadratic form $\bb$ is {\bf positive
  definite} if and only the following two conditions apply:
\begin{itemize}
\item For all $\vec{x}$ in $V$, we have $\bb(x,x)\geq 0.$
\item If $\vec{x}$ is isotropic then $\vec{x}=\vec{0}.$
\end {itemize}
Furthermore, if $-\bb$ is positive definitive, we call $\bb$ a {\bf
  negative definite} form.

If $\bb_1$ is a symmetric bilinear form on $V_1$, and $\bb_2$ is a
symmetric bilinear form on $V_2$, we can define a bilinear form on the
direct sum $V_1\oplus V_2$. We denote this form as $\bb=\bb_1\oplus
\bb_2$ and call it the orthogonal sum of $\bb_1$ and $\bb_2$. It is
defined as follows:
$$
\forall (\vec{x}_1, \vec{x}_2, \vec{y}_1, \vec{y}_2)\in V_1\times
V_2\times V_1\times V_2: \bb(\vec{x}_1+\vec{x}_2,
\vec{y}_1+\vec{y}_2)= \bb_1(\vec{x}_1, \vec{y}_1) +\bb_2(\vec{x}_2, \vec{y}_2).
$$

If $W$ is a vector subspace of $V$, we denote by $\bb_{|W}$ the
restriction of $\bb$ to $W$, i.e. the bilinear from $W\times W$ that
sends $\vec{x}$ and $\vec{y}$ in $W$ to $\bb(\vec{x}, \vec{y})$.

Recall that by the spectral theorem, the matrix $G_{\bb}$ of a form
$\bb$ on $V$  can be diagonalized. Let $V_0$ be the kernel of $V$ and
$V_{+}$ and $V_{-}$ be the vector spaces respectively spanned by the
eigenvectors corresponding the all positive eigenvalues,  resp.  all
negative eigenvalues. As a direct consequence, we have:
\begin{theorem}
  If $V_0$, $V_{+}$ and $V_{-}$ are as above, then:
  $$\bb=\bb_{|V_0}\oplus \bb_{|V_+}\oplus \bb_{|V_-}.$$

Furthermore, $\bb_{|V_0}$ is the constant zero form,  $\bb_{|V_+}$ is
positive definite and $\bb_{|V_-}$ is
negative definite.
\end{theorem}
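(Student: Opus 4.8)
The plan is to invoke the spectral theorem for real symmetric matrices and then translate its conclusion back into the language of bilinear forms. First I would pick any basis $\vec{v}_1,\dots,\vec{v}_d$ of $V$ so that $\bb$ is represented by the symmetric matrix $G_{\bb}$. By the spectral theorem there is an orthonormal (for the standard inner product) basis of $\RR^d$ consisting of eigenvectors of $G_{\bb}$, with real eigenvalues $\lambda_1,\dots,\lambda_d$. Grouping the eigenvectors according to whether the corresponding eigenvalue is zero, positive, or negative, and transporting them back through the coordinate isomorphism $\RR^d\cong V$, produces vectors that span three subspaces $V_0$, $V_+$, $V_-$ with $V=V_0\oplus V_+\oplus V_-$. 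Because eigenvectors for distinct eigenvalues are orthogonal and one may choose an orthogonal eigenbasis within each eigenspace, the three families are mutually orthogonal with respect to $\bb$, which is exactly the statement that $\bb=\bb_{|V_0}\oplus\bb_{|V_+}\oplus\bb_{|V_-}$ in the sense of the orthogonal-sum definition given above.

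Next I would verify the three claims about the summands. For $\vec{x}\in V_0$, writing $\vec{x}$ in the eigenbasis (all components in eigenvalue-zero directions) gives $\bb(\vec{x},\vec{y})=0$ for every $\vec{y}$, in particular $\bb_{|V_0}$ is identically zero; one should also note that $V_0$ defined this way coincides with the kernel $\Orth{V}$, since the kernel is precisely the span of the $0$-eigenvectors (a vector is in the kernel iff $G_{\bb}\vec{x}=\vec{0}$). For $\vec{x}=\sum_i c_i\vec{e}_i\in V_+$ with $\vec{e}_i$ the chosen eigenvectors spanning $V_+$, bilinearity and orthonormality give $\bb(\vec{x},\vec{x})=\sum_i \lambda_i c_i^2$ with all $\lambda_i>0$, so $\bb(\vec{x},\vec{x})\geq 0$ and vanishes only when all $c_i=0$, i.e. $\bb_{|V_+}$ is positive definite per the definition in the excerpt. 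The argument for $\bb_{|V_-}$ is identical applied to $-\bb$, giving negative definiteness.

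The one genuinely delicate point — and the step I expect to require the most care — is making the notational bridge between the matrix statement produced by the spectral theorem and the coordinate-free orthogonal-sum notation $\bb_1\oplus\bb_2$ introduced earlier: one must check that the eigenbasis change of coordinates is compatible with direct sums, i.e. that if $P$ is the orthogonal matrix whose columns are the eigenvectors then $\transp{P}G_{\bb}P$ is block diagonal with a zero block, a positive-definite block, and a negative-definite block, and that these blocks are exactly the Gram matrices of $\bb_{|V_0}$, $\bb_{|V_+}$, $\bb_{|V_-}$ in the corresponding transported bases. Everything else is a routine unwinding of definitions. I would close by remarking that the dimensions of $V_+$ and $V_-$ (the signature) are independent of all choices, by Sylvester's law of inertia, since this invariant is what the later sections of the paper will use.
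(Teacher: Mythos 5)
Your proof is correct and follows exactly the route the paper intends: the paper states this theorem as a direct consequence of the spectral theorem applied to $G_{\bb}$, without writing out details, and your argument simply fills in that standard verification (grouping an orthonormal eigenbasis by eigenvalue sign, checking mutual $\bb$-orthogonality, and reading off definiteness from the eigenvalue signs). No gaps; the coordinate-bridging point you flag is indeed routine.
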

As a direct corollary, any non-degenerate form is the orthogonal sum of
a positive definite and a negative definite form.

Using the above theorem, we define the signature of the form $\bb$ as
$\sig(\bb)=|\dim{V_+}-\dim{V_-}|.$ This is the absolute value of
the difference between the number of positive and negative
eigenvalues. We put an absolute value here to emphasize the fact that $\bb$
and $-\bb$ induce the same geometry\footnote{Likewise, we also have scaling
  invariance, i.e for any
  non-zero real $\lambda$, the form $\lambda\,\bb$ induces the
  same geometry.} on $V$.

\subsection{Indefinite lattices} Let $V$ be a real vector space
equipped with a symmetric bilinear form $\bb$. Let $\vec{v}_1$,
\dots, $\vec{v}_d$ be vectors in $V$. We consider the additive
subgroup:
$$L=\ZZ\,\vec{v}_1+\cdots+\ZZ\,\vec{v}_d,$$
and want to define a notion of discreteness.

Let us first consider an unsatisfactory approach that motivate the
choice we make.
This idea would be to request that  the ``squared-norm'' of
any non-isotropic vector should be bounded 
away from $0$. We can see that this would, for example, fail for
the following acceptable looking Gram matrix:
$$
\begin{pmatrix}
  1 & 0 \\
  0 & -e^2
\end{pmatrix}.
$$
As the direct sum of two orthogonal definite lattices, one positive
and one negative, this is a very reasonable candidate for an
indefinite lattice. However, calling the two vectors $\vec{u}$ and
$\vec{v}$ we see that:
$$
\bb(a\vec{u}+b\vec{v}, a\vec{u}+b\vec{v})=a^2-(be)^2=(a+be)(a-be).
$$
Thanks to~\cite{davis1978rational}, we know that there are
infinitely many pairs of positive integers $(a,b)$ such that:
$$
\left|e-b/a\right|<\frac{\log{\log{a}}}{a^2\log{a}}.
$$
As a consequence, there are infinitely many pairs with:
$$
|\bb(a\vec{u}+b\vec{v}, a\vec{u}+b\vec{v})| <\frac{4\,\log{\log{a}}}{\log{a}}.
$$
So the (non-zero) ``squared-norm'' of $a\vec{u}+b\vec{v}$ can be
made arbitrarily small.
\\

The same counter-example, would be to rule out bounding all
values of the bilinear form away from $0$. However, if we fix a basis
of the vector space $(\vec{w}_1,\cdots,\vec{w}_d)$ and consider the
quantity:
$$M_w(\vec{v})=\max{\left(|\bb(\vec{v},\vec{w}_1)|,\cdots,
    |\bb(\vec{v},\vec{w}_d)\right|)},$$
we can insist that $M_w(\vec{v})$ should be either $0$ or bounded away
from $0$. Unfortunately, this definition is not a coordinate-free.
However, we could not find a better alternative.

We can formulate the idea slightly differently by defining:
\begin{equation}\label{eqn:altnorm}
  \AltNorm{\vec{x}}_{\bb}=\Norm{G_\bb\cdot \vec{x}}.
\end{equation}

It is clear that this is a norm, which  induces a  topology, on the
direct sum $V_{+}\oplus V_{-},$ even though it is identically $0$ on
$V_{0}$. We easily see that:
\begin{lemma} \label{th:degen}
  For any $\vec{x}\in V$:
  $$
  \AltNorm{\vec{x}}_{\bb}=0 \quad \Leftrightarrow \quad \vec{x}\in V_0.
  $$
\end{lemma}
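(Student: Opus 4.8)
The statement to prove is that $\AltNorm{\vec{x}}_{\bb} = \Norm{G_\bb \cdot \vec{x}} = 0$ if and only if $\vec{x} \in V_0$, where $V_0 = \Orth{V}$ is the kernel of $\bb$. The plan is to unwind both the definition of the Euclidean norm $\Norm{\cdot}$ and the definition of $V_0$, and observe that the equivalence reduces to the elementary fact that a vector $\vec{u}$ satisfies $\Norm{\vec{u}} = 0$ exactly when $\vec{u} = \vec{0}$. So the real content is the identity $G_\bb \cdot \vec{x} = \vec{0} \iff \vec{x} \in V_0$.

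For the forward direction, suppose $\AltNorm{\vec{x}}_{\bb} = 0$. Since $\Norm{\cdot}$ is a genuine (positive-definite) norm on $\RR^d$, this forces $G_\bb \cdot \vec{x} = \vec{0}$. Writing the entries of this vector in the fixed basis $\vec{v}_1,\dots,\vec{v}_d$, the $i$-th coordinate of $G_\bb \cdot \vec{x}$ is precisely $\bb(\vec{v}_i, \vec{x})$ (this is immediate from the definition of $G_\bb$ together with bilinearity, expanding $\vec{x}$ in the basis). Hence $\bb(\vec{v}_i, \vec{x}) = 0$ for every $i$, and since the $\vec{v}_i$ span $V$, bilinearity gives $\bb(\vec{y}, \vec{x}) = 0$ for all $\vec{y} \in V$; by symmetry $\vec{x} \in \Orth{V} = V_0$. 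The converse is the same computation read backwards: if $\vec{x} \in V_0$ then in particular $\bb(\vec{v}_i, \vec{x}) = 0$ for each $i$, so every coordinate of $G_\bb \cdot \vec{x}$ vanishes, whence $G_\bb \cdot \vec{x} = \vec{0}$ and $\AltNorm{\vec{x}}_{\bb} = \Norm{\vec{0}} = 0$.

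I do not expect a genuine obstacle here; this is essentially a definition-chasing lemma. The only point that deserves a clean sentence is the coordinate identity $\bigl(G_\bb \cdot \vec{x}\bigr)_i = \bb(\vec{v}_i, \vec{x})$, which is what links the matrix $G_\bb$ back to the form $\bb$ and which makes ``$G_\bb$ has a nontrivial kernel as a matrix'' synonymous with ``$\bb$ has a nontrivial kernel as a form'' — consistent with the earlier remark that $\bb$ is non-degenerate iff $G_\bb$ has full rank. One may also phrase the whole argument spectrally, using the decomposition $\bb = \bb_{|V_0} \oplus \bb_{|V_+} \oplus \bb_{|V_-}$: in an orthonormal eigenbasis, $G_\bb$ is diagonal with the eigenvalues on the diagonal, $V_0$ is exactly the span of the zero-eigenvalue coordinates, and $G_\bb \cdot \vec{x} = \vec{0}$ iff $\vec{x}$ has support only on those coordinates. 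Either route is short; I would present the direct coordinate computation as the main proof and perhaps mention the spectral viewpoint in a sentence.
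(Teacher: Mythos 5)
Your proof is correct and is exactly the argument the paper has in mind: the paper offers no written proof (the lemma is introduced with ``We easily see that''), and your definition-chasing — $\AltNorm{\vec{x}}_{\bb}=0$ iff $G_\bb\cdot\vec{x}=\vec{0}$, whose $i$-th coordinate is $\bb(\vec{v}_i,\vec{x})$, so this holds iff $\vec{x}$ lies in the kernel $V_0$ of $\bb$ — is precisely the intended one. No gaps; the spectral remark is a fine optional aside since for the symmetric matrix $G_\bb$ the kernel coincides with the zero eigenspace.
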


Using this, we call $L$ an {\bf indefinite lattice} if and only if
non-zero values of $\AltNorm{\vec{x}}_{\bb}$ are bounded away from $0$.
Equivalently, there exists a real $\epsilon_L>0$
such that for all $\vec{x}$ in $L$, the condition
$\AltNorm{\vec{x}}_{\bb}<\epsilon_L$ implies that
$\AltNorm{\vec{x}}_{\bb}=0$.
In particular, thanks to Lemma~\ref{th:degen}, if $\AltNorm{\vec{x}}_{\bb}<\epsilon_L$ then
$\vec{x}\in V_0.$ We refer to the existence of $\epsilon_L$ as begin the
{\bf discreteness} condition and call the minimal possible
$\epsilon_L$ the {\bf discreteness parameter} of the (indefinite)
lattice. Note that $\epsilon_L$ depends on the specific choice of
representation of $L$ by its Gram matrix. It changes if we replace
$G$ by $\transp{U}\cdot G\cdot U$ (with $U$ unitary). However, $L$
remains discrete under such a change of basis.\\

\begin{remark}
  It can be noted that this notion of discreteness is equivalent to
  requesting that the ordinary subgroup associated to the (possibly
  degenerate) Gram matrix $G^2$, which is positive definite on
  $V_{+}\oplus V_{-}$ should be a lattice. Again this is not
  coordinate-free.
\end{remark}

Let $V_L$ denote the vector space spanned by $L$. We can restrict
ourselves to consider the form $\bb_{|V_L}.$ For ease of notation, we
now assume that $V=V_L$ and write the form as $\bb_{L}.$ To this form,
we associate, as before, the real symmetric Gram matrix:
$$
G_{L}=\begin{pmatrix}
  \bb(\vec{v}_1,\vec{v}_1) & \bb(\vec{v}_1,\vec{v}_2) & \cdots
  & \bb(\vec{v}_1,\vec{v}_d)\\
   \bb(\vec{v}_2,\vec{v}_1) & \bb(\vec{v}_2,\vec{v}_2) & \cdots
  & \bb(\vec{v}_2,\vec{v}_d)\\
  \vdots &\vdots & \ddots &\vdots\\
   \bb(\vec{v}_d,\vec{v}_1) & \bb(\vec{v}_d,\vec{v}_2) & \cdots
  & \bb(\vec{v}_d,\vec{v}_d)
\end{pmatrix}.
$$

If $G_L$ is an integral matrix, we say that $L$ is an {\bf integral
  indefinite lattice.} Note that if $\bb_{L}$ is either positive and
negative definite, we recover the definition of standard lattices and
standard integral lattices.

When $\bb_L$ is non-degenerate, we define its {\bf Gram-determinant} as
the (non-zero) determinant of the matrix $G_{L}.$ For a standard lattice,
this is the square of the usual determinant. Note that for an indefinite
lattice, the Gram-determinant can be negative.

Our definition generalizes of the notion of indefinite
lattices that was considered by  by Ivanyos and
Sz{\'a}nt{\'o} in~\cite{IvanyosSz96} and by  Simon in~\cite{Simon05}. Indeed, they only
considered the case of non-degenerate integral 
symmetric matrices, which are clearly discrete according to our definition.

\subsubsection{Extension to Hermitian lattices}
After considering lattices associated with arbitrary real symmetric
matrices, the next logical step would be to consider an Hermitian
scalar product, i.e. a matrix with complex entries whose transpose is
equal to its conjuguate. In that situation, its would be natural to
consider lattices formed of linear combinations with coefficients in
$\ZZ[i]$. However, it is easy with a doubling of the dimension to
transform this back into an indefinite lattice. Indeed, it suffices to
encode the complex $a+ib$ into the $2\times 2$ matrix:
$$
\begin{pmatrix}
  a& -b\\
  b&a
\end{pmatrix}.
$$
We can see that replacing each entry of an Hermitian matrix by such a
small block yields a real symmetric matrix.

\subsection{Binary quadratic forms} \label{sec:binforms}
The study of binary quadratic forms and their reduction is a classical
topic in number theory which has attracted a lot of attention. It is
deeply related to lattice and indefinite lattice reduction in
dimension~2.
\begin{definition}
  Let us start with a few standard definitions.

  \begin{itemize}
  \item An {\bf integral binary quadratic form} is homogeneous bivariate
    polynomial of degree two: $Q(x,y)=a\,x^2+b\,xy+c\,y^2,$ with $a$,
    $b$ and $c$ integers. It is often denoted by the shortened notation $(a,b,c)$.
  \item The {\bf discriminant} of $(a,b,c)$ is $\Delta=b^2-4ac.$
  \item We say that the form $Q(x,y)$ represents a value $m$  if there
    exists integers $x_0$ and $y_0$ such that $Q(x_0,y_0)=m.$
  \item The representation of $v$ by $x_0$ and $y_0$ is called {\bf
      primitive} if $x_0$ and $y_0$ are coprime.
  \item Given four integers $\alpha, \beta, \gamma, \mbox{and}\
    \delta$ such that $\alpha\delta-\beta\gamma\neq 0$ we can define a
    new quadratic form by a simple change of variables:
    $$ Q'(x,y)=Q(\alpha x+\beta y, \gamma x +\delta y).$$
    The coefficients $(a',b',c')$ of $Q'$ can easily be computed and
    the transformed discriminant is
    $\Delta'=(\alpha\delta-\beta\gamma)^2\Delta.$
  \item If $(\alpha\delta-\beta\gamma)=\pm 1$ we say that $Q$ and
    $Q'$ are {\bf equivalent}. It is easy to check that this is indeed an
    equivalence relation on quadratic forms.
  \item When   $(\alpha\delta-\beta\gamma)=1$, we say that $Q$ and
    $Q'$ are {\bf properly equivalent}, otherwise we call the
    equivalence {\bf improper}.
  \end{itemize}
\end{definition}

The reduction of quadratic forms is the process of transforming a
quadratic form into an equivalent form that satisfies additional
properties making it preferable. Reduction is often performed
using only proper equivalence.

In this paper, we need to deviate from standard practice and consider
{\bf rational binary quadratic form} where the coefficients $(a,b,c)$
are no longer limited to integers but can also be rational numbers.
However, this is not a issue since the definition of reduction for
quadratic forms is scale invariant, namely if $(a,b,c)$ and $(a',b',c')$ are
(properly) equivalent then so are $(\lambda a, \lambda b, \lambda c)$
and $(\lambda a', \lambda b',\lambda c').$ Furthermore, the notions of
reduction are such that if $(a,b,c)$
is reduced, then so is $(\lambda a, \lambda b, \lambda c)$ for any $\lambda>0$.
Thus, we can easily move rational form equivalence problem to an
integral instance by multiplying by a positive integer $\lambda$ that clears all
denominators of $(a, b,c)$ and dividing the result $(a',b',c')$ by
$\lambda$ to restore the denominators. For practical purposes, this
is not even needed since all the reduction algorithms we consider are also
scale-invariant and  directly work
with rational forms.

Before going into the details of reduction of quadratic forms, let us
first explain the useful parallel between binary quadratic forms and
two-dimensional lattices. Let: 
$$G=
  \begin{pmatrix}
    a & b/2\\
    b/2&c
    \end{pmatrix},
$$
be the (rational) Gram matrix of lattice given by two vectors
$\vec{u}$ and $\vec{v}$.

Remark that:
$$
\begin{pmatrix}x&y\end{pmatrix}\cdot G \cdot
\begin{pmatrix}x\\y\end{pmatrix}=ax^2+bxy+cy^2.
$$
We say that $G$ {\bf induces} the form $(a,b,c)$.
We easily that the discriminant of the form $(a,b,c)$ is
$\Delta=-4\det{G}.$
Furthermore, let $T$ denote the matrix:
$$
\begin{pmatrix}
  \alpha & \beta\\
  \gamma&\delta
\end{pmatrix},
$$
and let $G'=\transp{T}\cdot G\cdot T$. It is easy to see that $G'$ induces
the form $(a',b',c')$ obtained form $(a,b,c)$ by the change of variables
provided by $\alpha, \beta, \gamma,$ and $\delta$. Furthemore, we have
equivalent induced forms when $\det(T)=\pm 1$ and proper equivalence
when $\det(T)=1.$ This shows that the reduction of binary quadratic
forms and of Gram matrix are just two sides of the same coin. This is
our motivation to use the well studied reduction of indefinite form to
deal with the case where $\det(G)<0$.

To push the analogy even further, let us remark that $(x_0,y_0)$
represents $m$, if and only if, the (indefinite) norm of
$x_0\vec{u}+y_0 \vec{v}$ is $m$. The representation is primitive when
the integer point $(x_0,y_0)$ is visible from the origin in the
lattice $\ZZ^2$.

\subsubsection{Definite forms}
For a quadratic form $(a,b,c)$, remark that it is
definite\footnote{Positive definite if $a>0$ and negative definite otherwise.}
when $\Delta<0$. In that situation, the form has no real
(thus no rational roots).

Classically, e.g. see~\cite[Def $5.3.2$]{Cohen2013}, a positive
definite form is said to be {\bf reduced} when $|b|\leq a 
\leq c.$ Since we also consider negative definite forms, we instead
use the condition $|b|\leq |a| 
\leq |c|.$ There is a simple algorithm that reduces an initial form by
iteratively making improving it. Assuming that the initial
form\footnote{If it is not the case, we replace $(a,b,c)$ by the
  properly equivalent form $(c,-b,a)$. The equivalence can be checked
  by using the unimodular matrix $ \begin{pmatrix}
  0 & 1\\
  -1& 0
\end{pmatrix}.
$}
satisfies $|a|\leq |c|$, we repeatly apply the transformation:
$$
(a,b,c) \longrightarrow (a',b',c')=(a,b+2\lambda a, c+\lambda b+\lambda^2 a),
$$
where $\lambda$ is chosen as the closest integer to $-b/2a$.
This corresponds to applying the unimodular matrix:
$$
\begin{pmatrix}
  1 & \lambda\\
  0& 1
\end{pmatrix}.
$$
If $|a'|>|c'|$, then we turn the form into $(c',-b',a')$ and repeat.
Otherwise, the form $(a',b',c')$ is reduced and we stop.

\subsubsection{Indefinite forms}
For indefinite forms, it is usual to consider only forms where
$\Delta$ is not a square. In particular, this implies $\Delta\neq 0$,
thus $\Delta>0$. This is usually done for integral forms. However, it
is worth noting that the condition $\Delta$ is not a square can be
interpreted as {\it $\Delta$ is not the square of a rational} for forms with
rational coefficients.

In this situation, the polynomial $ax^2+bx+c$ has two real roots but
no rational ones, and thus defines a proper quadratic extension of
$\QQ,$ namely $\QQ(\sqrt{\Delta}).$ This is the rationale behind
the squarefreeness  restriction. Such a squarefree indefinite form is
called {\bf reduced} whenever the following condition, see~\cite[Def
$5.6.2$]{Cohen2013}, applies:
\begin{align}\label{eqn:redcond}
   &\left| \sqrt{\Delta} -2|a|\right| < b <\sqrt{\Delta}.
\end{align}
Since $a$ and $b$ are integers (or rationals for us) and
$\sqrt{\Delta}$ is irrational, it is clear that the two 
inequalities are strict. Furthermore, see~\cite[Prop $5.6.3$]{Cohen2013}, replacing $a$ by $c$ in
Equation~\eqref{eqn:redcond} yields an equivalent condition for reducedness.

Starting from a non-reduced form, we can iteratively apply the following
transformation:
$$
(a,b,c) \rightarrow (c, -b+2c\delta, a-b\delta+c\delta^2),
$$
where $\delta$ is the unique integer that satisfies:
\begin{align*}
  -|c|<-b+2c\delta\leq |c| &\quad \mbox{when}\ |c|> \sqrt{\Delta} \quad
                                         \mbox {and}\\ 
  \sqrt{\Delta}-2|c|<-b+2c\delta<\sqrt{\Delta}&\quad \mbox{otherwise}.
\end{align*}

As long as $|a-b\delta+c\delta^2|<|c|$, we repeat the
transformation. If $|a-b\delta+c\delta^2|\geq |c|,$ we obtain a
reduced form. By contrast with the definite case, we can continue
applying the reduction transformation after reaching our first reduced
form. This creates a cycle of reduced forms and the signs of the first
component of the forms alternate throughout the cycle.

\begin{remark}
  The choice of $\delta$ that differs depending on the relative size
  of $|c|$ and $\sqrt{\Delta}$ is necessary to guarantee that we reach
  a reduced form in polynomial time. A proof is of the complexity is
  given in~\cite[Prop $5.6.6$]{Cohen2013}. For a problematic example
  without this, the reader can, for example, consider the form:
  $$
  (5133516356526721720, -2*5133515988396719824,  5133515620266744327)
  $$
\end{remark}

\subsubsection{\label{sec:missing}The missing cases}
Unfortunately, while the problem was initially considered by
Gauss~\cite[Problema 206]{Gauss1801}, modern literature does not tell
us what to do when~ $\Delta$ is a square. It turns out that we need to distinguish two
cases.

First, consider $\Delta=0$. This can happen in the fully degenerate
situation  $a=b=c=0$, where nothing can or needs to be done. It can
also happens when $b^2=4ac$. In that case, we can apply the definite
case algorithm until we reach a form $(0,0,c)$ where we stop. In
terms of two-dimensional lattices, we have two linearly dependent vectors
and we essentially compute their GCD. This results in a pair formed a
the zero vector and a vector which by itself generates the same
lattice as the two input vectors together. Thus, this vector can be written
as either of the (non zero) input vectors divided by an integer.

Second, consider the case where $\Delta>0$ is the square of a
rational. Here, we would like to adapt the indefinite case. However,
when we look at the conditions given by Equation~\eqref{eqn:redcond},
we see that since $\sqrt{\Delta}$ is now
rational, the inequalities may become large. Furthermore, we need to
choose which of the various options suit us.

The first equality case in condition~\eqref{eqn:redcond} occurs when
$b=0$. Then, the only possible generalisation of 
Equation~\eqref{eqn:redcond} implies:
$$2|a|=\sqrt{\Delta}, \quad \mbox{and}\ c=-a.$$
In that situation, we can create a cycle between the two reduced forms
$(\sqrt{\Delta},0,-\sqrt{\Delta})$ and
$(-\sqrt{\Delta},0,\sqrt{\Delta})$ using the proper equivalence given
by $\begin{pmatrix}
  0 & 1\\
  -1& 0
\end{pmatrix}.$ This creates a sign alternance which is
welcome in the algorithm of Section~\ref{sec:ouralg}.
\\

The second equality case in condition~\eqref{eqn:redcond} is
$b=\sqrt{\Delta}$. We then need $ac=0$. With our application in mind,
we choose to enforce $c=0.$ Applying the proper equivalence given by
 $\begin{pmatrix}
  1 & 0\\
  \lambda& 1
\end{pmatrix},$ the form $(a,b,c)$ is transformed into
$(a+\lambda\,b,b,c).$ For a reduced form, we insist that $2|a|< b$.
Note that if $|a|=b/2$, the form $(\pm b/2,b,0)$ is equivalent to 
$(\pm b/2,0,\mp b/2)$ using the unimodular transform $\begin{pmatrix}
  1 & \mp 1\\
  0& 1
\end{pmatrix}.$
We thus go back to the first case of equality. When $a=0$, we are in the the hyperbolic plane described in
Section~\ref{sec:Hyp-planes}. 
\\

Outside of these two special cases, we also need to slightly modify
the reduction process by
requiring that in the reduction step $$
(a,b,c) \rightarrow (c, -b+2c\delta, a-b\delta+c\delta^2),
$$
the value $\delta$ is chosen as the unique integer that satisfies:
\begin{align*}
  -|c|<-b+2c\delta\leq|c| &\quad \mbox{when}\ |c|> \sqrt{\Delta} \quad
                                         \mbox {and}\\ 
  \sqrt{\Delta}-2|c|<-b+2c\delta\leq\sqrt{\Delta}&\quad \mbox{otherwise}.
\end{align*}

\section{Theoretical results about indefinite lattices}

Our goal for indefinite lattice reduction is to provide a unimodular
transformation that achieves two distinct purposes:
\begin{itemize}
\item Decompose $\bb_L$ into an orthogonal sum of the zero form and
  a non-degenerate form.
\item Put the non-degenerate form in a good basis with stronger
  guarantees than provided by~\cite{IvanyosSz96} or~\cite{Simon05}. 
\end{itemize}
For convenience, our algorithmic results, provided in Section~\ref{sec:ouralg}, only deals
with integral indefinite lattices.

In the present section, we consider the first purpose in the general
case and show how discreteness guarantees a
decomposition via a unimodular into an orthogonal sum of the zero form
and a non-degenerate form. More precisely, we prove the following.

\begin{theorem}\label{thm:zerofactor}
  Let $L$ be a lattice, given by a bilinear form $\bb$ on $V$, a
  family of vectors $(\vec{v}_1, \cdots,\vec{v}_k),$ with discreteness
  parameter $\epsilon_L>0$. We denote by $G_L$ the ($k$-dimensional) Gram matrix of $L$.

  If $\rank{G_L}=\ell$, there exists a unimodular matrix $U$ of dimension
  $k$, such that:
  $$
  \transp{U}\,G_L\,U=
  \begin{pmatrix}
    G_\ell & 0 \\
    0 &0
  \end{pmatrix},
  $$
  where $G_\ell$ is a non-degenerate $\ell$-dimensional Gram matrix.
\end {theorem}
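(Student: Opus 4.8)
The plan is to work entirely with the $k\times k$ Gram matrix $A:=G_L$, regarded as a real symmetric matrix of rank $\ell$, and to use discreteness only in order to control its \emph{integer} kernel $K:=\{\vec{x}\in\ZZ^k : A\vec{x}=\vec{0}\}$. The subtle point is that for a general real symmetric $A$ the integer kernel $K$ could be much smaller than $k-\ell$ (its rational analogue would be larger), so one cannot in general split off a rank-$(k-\ell)$ summand; the hypothesis $\epsilon_L>0$ is exactly what prevents this. Unwinding the definition of $\AltNorm{\cdot}_{\bb}$ in~\eqref{eqn:altnorm}, the discreteness of $L$ says that the subgroup $A\ZZ^k=\{A\vec{x}:\vec{x}\in\ZZ^k\}$ of $\RR^k$ is discrete; since the columns of $A$ generate it and span the $\ell$-dimensional column space of $A$, it is a full-rank lattice in that column space, so $\operatorname{rank}_\ZZ(A\ZZ^k)=\ell$. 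Applying the rank--nullity identity to the surjective group homomorphism $\vec{x}\mapsto A\vec{x}$ from $\ZZ^k$ onto $A\ZZ^k$ then gives $\operatorname{rank}_\ZZ K=k-\ell$. (This is the coordinate version of the phenomenon recorded in Lemma~\ref{th:degen}.)

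Next I would note that $K$ is a \emph{saturated} (primitive) sublattice of $\ZZ^k$: if $m\vec{x}\in K$ for some positive integer $m$ and $\vec{x}\in\ZZ^k$, then $m\,A\vec{x}=\vec{0}$ in $\RR^k$, hence $A\vec{x}=\vec{0}$ and $\vec{x}\in K$. A saturated sublattice is a direct summand, since $\ZZ^k/K$ is finitely generated and torsion-free, hence free; so any $\ZZ$-basis $(\vec{u}_{\ell+1},\dots,\vec{u}_k)$ of $K$ extends to a $\ZZ$-basis $(\vec{u}_1,\dots,\vec{u}_k)$ of $\ZZ^k$. Let $U$ be the unimodular matrix with these columns, and write $U=[\,U_1\mid U_2\,]$, where $U_2$ consists of the last $k-\ell$ columns, i.e.\ a basis of $K$.

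Finally I would compute $\transp{U}\,A\,U$ blockwise. The columns of $U_2$ lie in $\ker A$, so $A\,U_2=0$, and by symmetry of $A$ also $\transp{U_2}\,A=\transp{(A\,U_2)}=0$; hence the off-diagonal blocks and the bottom-right block vanish, leaving
$$
\transp{U}\,A\,U=\begin{pmatrix} G_\ell & 0 \\ 0 & 0 \end{pmatrix},\qquad G_\ell:=\transp{U_1}\,A\,U_1 .
$$
Since $U$ is invertible, $\operatorname{rank}(\transp{U}\,A\,U)=\operatorname{rank}(A)=\ell$, which forces $\operatorname{rank}(G_\ell)=\ell$; as $G_\ell$ is $\ell\times\ell$ and symmetric, it is the Gram matrix of a non-degenerate form, which completes the proof.

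The only genuinely delicate step is the first one: establishing $\operatorname{rank}_\ZZ K=k-\ell$, where the hypothesis $\epsilon_L>0$ is indispensable and a naive real-linear-algebra argument would fail. Everything afterward is routine lattice bookkeeping (saturated $\Rightarrow$ direct summand, extension of a basis, block multiplication, invariance of rank under congruence by an invertible matrix). When writing it out one should be careful to state precisely which Gram matrix the discreteness parameter refers to — here the $k\times k$ matrix $G_L$ attached to the generating family, as in the definition of $\epsilon_L$ — so that the translation into ``$A\ZZ^k$ is discrete in $\RR^k$'' is unambiguous.
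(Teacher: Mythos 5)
Your proof is correct, and it reaches the same structural skeleton as the paper --- extend a basis of the integer kernel (a saturated, hence primitive, sublattice of $\ZZ^k$) to a unimodular $U$ and read off the block form --- but the crucial step, showing that this integer kernel has rank exactly $k-\ell$, is handled by a genuinely different argument. The paper proves it by contradiction: if some $\vec{x}$ in the real kernel $V_0$ were not in the span of $L_0=V_0\cap\ZZ^k$, Dirichlet's simultaneous approximation theorem yields an integer vector $\vec{x}_q$ close to $q\vec{x}$ with $\Norm{G_L\cdot\vec{x}_q}<\epsilon_L$; discreteness then forces $\vec{x}_q\in L_0$, contradicting the fact that $q\vec{x}$ stays at distance at least $d>0$ from the span of $L_0$. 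You instead transfer discreteness to the image: the hypothesis says precisely that $G_L\ZZ^k$ is a discrete subgroup of $\RR^k$, hence a lattice whose $\ZZ$-rank equals the dimension $\ell$ of its real span (the column space), and the first isomorphism theorem with rank additivity gives rank $k-\ell$ for the kernel. Your route is more conceptual and avoids the quantitative bookkeeping (the eigenvalue bound $C$, the choice of $Q$), at the price of invoking the structure theorem for discrete subgroups of $\RR^k$; the paper's route is self-contained modulo Dirichlet's theorem and stays closer to the norm-based definition of $\epsilon_L$. Both use discreteness exactly where it is indispensable, and your explicit blockwise computation of $\transp{U}\,G_L\,U$ together with the rank-invariance argument for the non-degeneracy of $G_\ell$ spells out a detail the paper leaves implicit. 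Your closing remark about which Gram matrix $\epsilon_L$ refers to also matches the paper's own usage: its proof likewise applies the discreteness condition to integer coefficient vectors through $\Norm{G_L\cdot\vec{x}_q}$.
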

\begin{proof}
  Let $V_0$ denote the right-kernel of $G_L$, it is a real vector space
  of dimension $k-\ell$. Consider the set:
  $$L_0=V_0 \cap \ZZ^n.$$
  $L_0$ is a primitive sublattice of $L$, see~\cite[Lemma
  4]{nguyen2009hermite}.
  Thus, any basis of $L_0$ can be completed into a basis
  of $\ZZ^k$, yielding a unimodular matrix $U$.

  To conclude the proof, we need to check that the dimension of $L_0$
  is equal to $k-\ell$. This requires the discreteness condition
  specified by the parameter $\epsilon_L$.

  We denote by $C$ the largest absolute value of eigenvalues of $G_L$.
  With this notation, for any vector $\vec{x}$ in $\RR^k$, we
  have:
  $$
  \Norm{G_L\cdot \vec{x}}\leq C\cdot \Norm{\vec{x}}.
  $$

  For a proof by contradiction, let's assume that the dimension of $L_0$ is
  smaller than $k-\ell$. Then there exists a vector $\vec{x}$ in $V_0$
  and not in the span of $L_0$. Let $Q$ be a large integer to be
  specified later on. For any positive integer $q$, we denote by
  $\vec{x}_q$ the vector formed by rounding each coordinate of the vector
  $q\,\vec{x}$ to the nearest integer. We know by applying Dirichlet's
  theorem, see~\cite[Theorem 4]{hanrot2009lll}, and summing on the $k$ coordinates that
  there 
  exists a positive integer
  $q\leq Q^k$ such that:
  $$
  \Norm{q\,\vec{x}-\vec{x}_q}_1\leq k/Q.
  $$
  For the same $q$, we thus have:
    $$
  \Norm{q\,\vec{x}-\vec{x}_q}_2^2\leq k/Q^2.
  $$
  And,
  $$
  \Norm{G_L\cdot \left(q\,\vec{x}-\vec{x}_q\right)}\leq C\sqrt{k}/Q.
  $$
  The right-hand side can be made smaller than $\epsilon_L$ by choosing a large
  enough $Q$. 
  Furthermore, since $\vec{x}$ is in $V_0$, we have:
  $$
  \Norm{G_L\cdot \left(q\,\vec{x}-\vec{x}_q\right)}=\Norm{G_L\cdot
    \vec{x}_q}< \epsilon_L
  $$
  Now, since $\vec{x}_q$ is integral, the
  discreteness condition implies that:
   $$
  \Norm{G_L\cdot  \vec{x}_q}=0.
  $$
  Thus $\vec{x}_q$ is in $V_0$ and as an integer vector also in $L_0$.
  \\

  On the other hand, let $d>0$ be the Euclidean distance of $\vec{x}$ to
  the vector space spanned by $L_0$. After scaling, the distance of $q \vec{x}$ to
  this vector space is also larger than $d$. However, we know that:
  $$
  \Norm{q\,\vec{x}-\vec{x}_q}_2^2\leq k/Q^2.
  $$
  Since $\vec{x}_q$ is in the vector space spanned by $L_0$, with a
  large enough $Q$, this yields the desired contraction and finishes
  the proof.
\end{proof}
\begin{remark}
  The unimodular matrix in the above theorem is not unique. However,
  they are related. To see that, assume that we have transformed $G_L$
  into both
  $$
    G_1=\begin{pmatrix}
    G_\ell & 0 \\
    0 &0
  \end{pmatrix}
  \quad \mbox{and} \quad
    G_2=\begin{pmatrix}
    G'_\ell & 0 \\
    0 &0
  \end{pmatrix}.
  $$
  Of course, these matrices follow the relation
  $\transp{U}\,G_1\,U=G_2$ for some unimodular matrix $U$.
  Note that the lower right quadrants of both $G_1$ and $G_2$
  arise from the choice of a basis of the lattice $L_0$, which are
  related by a $k-\ell$ dimensional unimodular matrix $W$. As a
  consequence, we can write:
  $$
  U=\begin{pmatrix}
    U_\ell & 0 \\
    V & W
  \end{pmatrix}.
  $$
  The submatrix $V$ is arbitrary and since $\det{U}=\det{U_\ell}\,
  \det{W}$ the fact that both $U$ and $W$ are unimodular implies that
  $U_\ell$ is unimodular.

  As a consequence, we have $G'_\ell= \transp{U_\ell}\,G_\ell \,
  U_\ell,$ we means that $G_\ell$ and $G'_\ell$ are Gram
  matrices of the same non-degenerate indefinite lattice (in
  different bases).

  We call this lattice the {\bf non-degenerate lattice induced} by
  $G_L$. We can use the induced non-degenerate lattice to define
  several important invariants of indefinite lattice. The first is the
  {\bf (non-degenerate) dimension} of the lattice that we identify
  with the dimension of the square matrix $G_\ell$. The second
  invariant of $L$ is the Gram determinant of the induced lattice. We call it the {\bf
    non-degenerate Gram determinant} of $L$ and denote it by
  $\det_{\neq 0} L$ or $\det_{\neq 0}G_ L.$ The third invariant is the
  signature of $G_\ell$, that we call the {\bf (non-degenerate)
    signature} of $L$.

  While it is clear that dimension and determinant are preserved when
  applying a unimodular transformation, the fact that the signature is
  also an invariant requires more analysis. We provide the proof of
  this fact in Appendix~\ref{th:unimodEigen}.
  
\end{remark}

\section{A bird's eye presentation of LLL and necessary adaptations}
In the present Section, we give a high-level description of the LLL
algorithm that we want to use as a blueprint for our new algorithm. This
description removes low-level details, as they are often bound to
the special case of ordinary lattices.

Given an ordinary lattice, described by a positive definitive
symmetric matrix $G$ and a family of vectors $L$, we want to find a
unimodular matrix $U$ such that:
$$ \transp{U}\cdot \transp{L}\cdot  G\cdot L\cdot  U=\begin{pmatrix}
  G_{\mbox{lll}} &0 \\
  0& 0
\end{pmatrix},
$$
where $G_{\mbox{lll}}$ is a positive definitive symmetric matrix that
is LLL-reduced. The right vectors corresponding the lower-right block
of $0$ remove any linear dependency present in $L$. As a consequence,
we get a basis of
the lattice when the input is a generating family.\\

\begin{algorithm}[H]
\DontPrintSemicolon
\KwData{Description of input lattice }
\KwResult{Unimodular $U$ that reduces the input lattice}
\tcp{All operations in bold implicitly updates $U$ and $L_c$}
Set {\it Current position} $k$ to $1$\;
Copy input $L$ to current lattice $L_c$\;
\While{$k$ not at end of (the non-zero part of) $L_c$}{
  {\bf Size reduce} vector in position $k+1$ \tcp*[f]{by earlier vectors}\;
  {\bf Move away} any zero vector to the end of $L_c$\;
  \tcp{If this happens, restart loop with updated vector at $k+1$}
  {\bf Compute} Orthogonal projections of vectors (using GSO)\;
  {\bf Reduce} the projected $2\times 2$ block of the vectors in
  positions $k$ and $k+1$\;
  \leIf{Reduction is active}{Set $k$ to $\max(k-1,1)$}{Set $k$ to $k+1$}
}
\caption{\label{alg:LLL} High-level description of LLL}
\end{algorithm}
\vspace{1cm}

\begin{theorem}[Proposition 1.6 of~\cite{lenstra1982factoring}] \label{th:LLL}
  For any lattice $L$ of dimension $d$, Algorithm~\ref{alg:LLL}
  outputs, in polynomial time, a basis $(\vec{v}_1, \cdots,\vec{v}_d),$ such that:
  $$
  \Norm{\vec{v}_1}\leq C_{LLL}^{d-1}\,\det(L)^{1/d},
  $$
  where $C_{LLL}=\sqrt[4]{4/3+\epsilon}$ and $\epsilon>0$ can be
  made arbitrally close to zero (by setting some parameter in the algorithm).
\end{theorem}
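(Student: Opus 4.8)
The plan is to follow the classical analysis of~\cite{lenstra1982factoring}, organised around the Gram--Schmidt orthogonalisation (GSO) of the current lattice. Write the non-zero part of $L_c$ as vectors $\vec{v}_1,\dots,\vec{v}_d$ with GSO $\vec{v}_i=\vec{v}_i^{*}+\sum_{j<i}\mu_{ij}\vec{v}_j^{*}$, the $\vec{v}_i^{*}$ being pairwise orthogonal. First I would spell out what Algorithm~\ref{alg:LLL} guarantees on termination: the \emph{size reduce} step enforces $|\mu_{ij}|\le 1/2$ for all $j<i$; the \emph{reduce the projected $2\times2$ block} step in positions $k,k+1$ is precisely the Lov\'asz test $\Norm{\vec{v}_{k+1}^{*}+\mu_{k+1,k}\vec{v}_k^{*}}^{2}\ge\delta\,\Norm{\vec{v}_k^{*}}^{2}$ for the internal parameter $\delta<1$ chosen so that $1/(\delta-1/4)=4/3+\epsilon$; and the \emph{move away any zero vector} step removes linear dependencies (a zero GSO component), so that on a generating family one ends with a genuine basis of $L$ padded by zero columns. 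Hence, on termination, $\vec{v}_1,\dots,\vec{v}_d$ is an LLL-reduced basis in the usual sense.

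Granting this, the quality bound is a short telescoping. Combining the Lov\'asz inequality with $|\mu_{k+1,k}|\le 1/2$ gives $\Norm{\vec{v}_{k+1}^{*}}^{2}\ge(\delta-1/4)\Norm{\vec{v}_k^{*}}^{2}$, i.e. $\Norm{\vec{v}_i^{*}}^{2}\le(4/3+\epsilon)\,\Norm{\vec{v}_{i+1}^{*}}^{2}$ for every $i$. Iterating, $\Norm{\vec{v}_1}^{2}=\Norm{\vec{v}_1^{*}}^{2}\le(4/3+\epsilon)^{\,i-1}\Norm{\vec{v}_i^{*}}^{2}$; taking the product over $i=1,\dots,d$ and using $\prod_{i=1}^{d}\Norm{\vec{v}_i^{*}}=\det(L)$ yields $\Norm{\vec{v}_1}^{2d}\le(4/3+\epsilon)^{d(d-1)/2}\det(L)^{2}$, hence $\Norm{\vec{v}_1}\le(4/3+\epsilon)^{(d-1)/4}\det(L)^{1/d}=C_{LLL}^{\,d-1}\det(L)^{1/d}$.

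For termination and the polynomial bound on the running time I would use the standard potential $D=\prod_{i=1}^{d}d_i$, where $d_i=\prod_{j\le i}\Norm{\vec{v}_j^{*}}^{2}$ is the Gram determinant of $\vec{v}_1,\dots,\vec{v}_i$. Size-reduction steps change no $\vec{v}_i^{*}$, hence leave $D$ fixed; a swap triggered by a failed Lov\'asz test changes only $d_k$ and multiplies it, and therefore $D$, by a factor at most $\delta<1$. Because the input Gram matrix is integral, every $d_i$ is a positive integer, so $D\ge1$ throughout; comparing with the initial value $D\le B^{O(d^{2})}$ (with $B$ bounding the input entries) bounds the number of swaps, and therefore of loop iterations, by $O(d^{2}\log B)$ up to a constant depending on $\epsilon$. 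Upgrading this to a genuine polynomial \emph{time} bound additionally requires checking that all intermediate data --- the $\vec{v}_i^{*}$, the coefficients $\mu_{ij}$, and the entries of $U$ --- remain of bit-size polynomial in $d$ and $\log B$.

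That last point is the real obstacle: the descent argument for the number of swaps is immediate once integrality of the Gram matrix is used, and the approximation estimate is one line, but controlling the size of the rationals produced by the GSO and by size reduction is the delicate, technical part of~\cite{lenstra1982factoring}. Since it is classical, I would invoke it and cite that reference, observing that the only feature here not already covered there --- processing a generating family instead of a basis via the \emph{move away zero vector} step --- is harmless, as deleting a zero GSO vector merely lowers the working dimension without otherwise disturbing the analysis.
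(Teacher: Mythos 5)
Your proposal is correct and the quality bound is proved exactly as in the paper: the Lov\'asz condition plus size reduction gives $\Norm{\vec{v}_i^*}^2\le(4/3+\epsilon)\Norm{\vec{v}_{i+1}^*}^2$, which is telescoped and multiplied over all indices to reach $\Norm{\vec{v}_1}^{2d}\le(4/3+\epsilon)^{d(d-1)/2}\det(L)^2$. The paper only spells out this approximation argument and delegates termination and bit-size control to~\cite{lenstra1982factoring}, which is precisely what you do with your (standard, correct) potential sketch and citation.
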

\begin{proof}
  For completeness, we recall the classical proof of the approximation quality
  from~\cite{lenstra1982factoring}.
  In each projected block, we have:
  $$
  \Norm{\vec{v}_{i}^{*}}^2\leq (4/3+\epsilon)\,\Norm{\vec{v}_{i+1}^{*}}^2.
  $$
  Thus, for all $j$:
   $$
  \Norm{\vec{v}_{1}}^2=\Norm{\vec{v}_{1}^{*}}^2\leq (4/3+\epsilon)^{j-1}\,\Norm{\vec{v}_{j}^{*}}^2.
  $$
  Multiplying all the inequalities, we find:
     $$
  \Norm{\vec{v}_{1}}^{2d}\leq (4/3+\epsilon)^{\sum_{j=1}^{d}(j-1)}\,\prod_{j=1}^{d}\Norm{\vec{v}_{j}^{*}}^2=(4/3+\epsilon)^{d(d-1)/2}\, \det(L)^2,
  $$
and the conclusion follows by taking the $2d$-th root.
\end{proof}

In Algorithm~\ref{alg:LLL}, we do not specify how the projected
$2\times 2$ blocks are computed, so let us recall that this is done
using Gram-Schmidt Orthogonalization.  With this in mind, we see that
in order to adapt LLL to the case of indefinite lattices, we need to
revisit three main ingredients:
\begin{itemize}
\item The Gram-Schmidt Orthogonalization.
\item The reduction of $2\times 2$ (projected) blocks.
\item And, the size reduction.
\end{itemize}
The rest of the present Section discusses these ingredients and how
they can be adapted to our purpose. It also discuss the notion of
hyperbolic planes and how they can fit into reduced bases via
admissible sub-lattices.

\subsection{Generalized Gram-Schmidt Orthogonalization}\label{sec:extendedGSO}
Gram-Schmidt Orthogonalization (GSO) is a fundamental tool for the LLL
algorithm. However, the existence of isotropic vectors can lead to
divisions by zero in GSO, so we need to revisit and generalize the GSO
algorithm to avoid this issue.

For the sake of generality, we first follow Simon~\cite{Simon05} and
adapt the usual Gram-Schmidt 
Orthogonalization algorithm for an indefinite lattice $L$ given by a
symmetric bilinear form $\bb$ and a family of vectors $(\vec{v}_1,
\cdots, \vec{v}_k),$ under the assumption that division by zero does not
occur. The algorithm iteratively construct a family of vectors $(\vec{v}^{*}_1,
\cdots, \vec{v}^{*}_k)$ using the formula:
\begin{equation}\label{eqn:GSObase}
  \vec{v}^{*}_i=\vec{v}_i-\sum_{j=1}^{i-1}\frac{\bb(\vec{v}_i,\vec{v}^{*}_j)}{\bb(\vec{v}^{*}_j,
    \vec{v}^{*}_j)}\, \vec{v}^{*}_j,
\end{equation}
with the empty sum interpreted as $0$ when $i=1$.

The assumption that division by zero do not occur is equivalent to $\bb(\vec{v}^{*}_i,
\vec{v}^{*}_i)\neq 0$ for all $i$.
After the GSO procedure, the following properties are satisfied:
\begin{align}
  \forall i:\quad & \bb(\vec{v}^{*}_i, \vec{v}^{*}_i) =\bb(\vec{v}^{*}_i, \vec{v}_i) \\
  \label{eqn:GSOisortho} \forall i> j: \quad & \bb(\vec{v}^{*}_i, \vec{v}^{*}_j) =0
\end{align}

In our lattice reduction from Section~\ref{sec:ouralg}, we largely but not
completly avoid vectors with $\bb(\vec{v}^{*}_i,
\vec{v}^{*}_i)=0.$ More precisely, whenever processing $\vec{v}_i$ to
compute $\vec{v}^{*}_i,$ we are given a set of bad indices $B_i\subset
[1\cdots i-2],$ such that:
\begin{align}
  \forall j\not\in B_i\ \mbox{and}\  (j-1)\not\in B_i:\quad &
                                                              \bb(\vec{v}^{*}_j, \vec{v}^{*}_j)\neq 0 \\
  \nonumber\\
  \forall j\in B_i: \quad & \bb(\vec{v}^{*}_j, \vec{v}^{*}_j) =0\\
  \nonumber    &\bb(\vec{v}^{*}_{j+1}, \vec{v}^{*}_{j+1}) =0\\
  \nonumber     &\bb(\vec{v}^{*}_{j}, \vec{v}^{*}_{j+1})   \neq 0
                  \quad\quad\quad\quad\mbox{and}\\
   \nonumber\\
 \forall \ 1\leq j<\ell<i: \quad  & \bb(\vec{v}^{*}_j, \vec{v}^{*}_\ell)=
                                0\quad \mbox{unless}\ j\in B_i\
                                \mbox{and}\ \ell=j+1.
\end{align}

We define the set of good indices as $G_i=[1\cdots i]\setminus \{j\,|
j\in B_i\ \mbox{or}\ (j-1)\in B_i\}.$
This lets us replace Equation~\eqref{eqn:GSObase} by:
\begin{equation}\label{eqn:GSOnew}
  \vec{v}^{*}_i=\vec{v}_i-\sum_{j\in G_i}\frac{\bb(\vec{v}_i,\vec{v}^{*}_j)}{\bb(\vec{v}^{*}_j,
    \vec{v}^{*}_j)}\, \vec{v}^{*}_j - \sum_{j\in  B_i}\frac{\bb(\vec{v}_i,\vec{v}^{*}_j)}{\bb(\vec{v}^{*}_j,
    \vec{v}^{*}_{j+1})}\, \vec{v}^{*}_{j+1} - \sum_{j\in  B_i}\frac{\bb(\vec{v}_i,\vec{v}^{*}_{j+1})}{\bb(\vec{v}^{*}_j,
    \vec{v}^{*}_{j+1})}\, \vec{v}^{*}_{j}.
\end{equation}

We see that Equation~\eqref{eqn:GSOisortho} holds for the resulting vector
$\vec{v}^{*}_i.$ Furthermore, if $\bb(\vec{v}^{*}_i, \vec{v}^{*}_i)
\neq 0$, we can immediately use Equation~\eqref{eqn:GSOnew} to compute
$\vec{v}^{*}_{i+1}$ using the set of bad indices $B_{i+1}=B_{i}.$

The case where $\bb(\vec{v}^{*}_i, \vec{v}^{*}_i)=0$ is taken care of
in Section~\ref{sec:ouralg} to ensure that Equation~\eqref{eqn:GSOnew} is only used
with a proper set of bad indices.
\\
We call the family $(\vec{v}^{*}_1,\cdots, \vec{v}^{*}_i)$ together
with the set $B_i$ the {\bf generalized Gram vectors at position $i$}
of $(\vec{v}_1,\cdots, \vec{v}_k)$. Note that for any $i'<i$ with
$i'\not\in B_i$, the truncated family
$(\vec{v}^{*}_1,\cdots, \vec{v}^{*}_{i'})$ and the set
$B_{i'}=B_i\cap[1\cdots i'-1]$ form the generalized Gram vectors at
position $i'$. However, we should be careful not to truncate at a bad
index, since that would lead to a truncated lattice with
Gram-determinant zero.\\

Consider any sublattice $L_i$ of $L$, generated by $\bb$ and the
truncated family of vectors $(\vec{v}_1,
\cdots, \vec{v}_i)$. After the usual GSO process, the Gram determinant $L_i$ is
easily computed as:
$$
\det{L_i}=\prod_{j=1}^{i}\bb(\vec{v}^{*}_j, \vec{v}^{*}_j).
$$

With our generalized GSO, the formula is updated to:
$$
\det{L_i}=\prod_{j\in G_i}\bb(\vec{v}^{*}_j,
\vec{v}^{*}_j) \cdot \prod_{j\in B_i}-\bb(\vec{v}^{*}_j, \vec{v}^{*}_{j+1})^2.
$$

It becomes clear in Section~\ref{sec:ouralg} that the generalized GSO provides a
funtionnal replacement of the standard GSO while allowing us to deal
with some isotropic vectors.

\subsection{Hyperbolic planes}\label{sec:Hyp-planes}
In the generalized GSO, the new situation we encounter happens when two
consecutive projected vectors $\vec{v}^{*}_i$ and $\vec{v}^{*}_{i+1}$
are isotropic but not orthogonal. This means that the 
Gram matrix of the corresponding $2\times 2$ projected lattice has the
form:
$$
G_{\alpha}=
\begin{pmatrix}
  0& \alpha\\
  \alpha&0
\end{pmatrix}.
$$

Such a (two-dimensional indefinite ) lattice is called an hyperbolic
plane. We distinguish two situations. In the first, we have either
$\vec{v}^{*}_i\neq \vec{v}_i$ or $\vec{v}^{*}_{i+1}\neq
\vec{v}_{i+1}$. As we see in Section~~\ref{sec:ouralg}, this means that $ \vec{v}_i$
(or  $\vec{v}_{i+1}$) can, in fact, be used to improve the lattice occurring
before position $i.$  This bypasses the hyperbolic
plane configuration.\\

In the second situation, $\vec{v}^{*}_i= \vec{v}_i$ and
$\vec{v}^{*}_{i+1}=\vec{v}_{i+1}.$ Thus $(\vec{v}_i, \vec{v}_{i+1})$
is an hyperbolic plane orthogonal to the lattice before it.
We can remark that $\vec{v}_i+\vec{v}_{i+1}$  and
$\vec{v}_i-\vec{v}_{i+1}$ are eigenvectors of $G_{\alpha}$. Their
respective ``squared-norms'' are $2\alpha$ and $-2\alpha$. So the lattice 
 spanned by them has Gram matrix:
$$
\begin{pmatrix}
  2\alpha &0 \\
  0&-2\alpha
\end{pmatrix},
$$
and corresponds to a sub-lattice of index $2$ of the
indefinite lattice spanned by $\vec{v}_i$ and $\vec{v}_{i+1}.$

We can further remark that the minimal absolute value of the 
``squared-norm'' of a non-isotropic vector in that plane is $2\alpha$. If we choose a basis
$(\vec{w}_i, \vec{w}_{i+1})$ of the two dimensional lattice with 
$|\Scal{\vec{w}_i}{\vec{w}_i}|=2|\alpha|$ then, after
orthogonalization, we have
$|\Scal{\vec{w}^{*}_{i+1}}{\vec{w}^{*}_{i+1}}|=|\alpha|/2,$ to preserve
the determinant. This is precisely the technique used
in~\cite{IvanyosSz96} to avoid
isotropic vectors and we see that it creates a drop by a factor of 4
between the ``squared-norms'' of the consecutive orthogonalised vectors.
\\

Since we do not want to accept such a gap, we simply choose to keep
hyperbolic planes in the lattice basis.  After all, it's Gram
matrix looks pretty good even if it is anti-diagonal instead of
diagonal.

\subsection{Admissible Lattice Gram matrices and local zeroes}\label{sec:admissible}
To reflect the two previous Sections, we insist in our lattice
reduction algorithm, at any point in its execution, only constructs
sub-families of vectors $(\vec{v}_1, \cdots, \vec{v}_k),$ with (partial)
Gram matrix that we deem
{\bf admissible}. This notion is defined by induction by saying that:
\begin{enumerate}
\item \label{case:empty} The induction is started by accepting the empty matrix $\emptyset$ of
  dimension $0$ as admissible. 
\item \label{case:normal} If $G$ is admissible of dimension $k-1$ then any block Gram
  matrix:
  $$
  H=\begin{pmatrix}
    G&\transp{\vec{s}} \\
    {\vec{s}}&t
  \end{pmatrix}
  $$
  with determinant $\det{H}\neq 0$ is an admissible Gram of
  dimension $k$.
\item \label{case:hyp} If $G$ is admissible of dimension $k-2$ then any block Gram
  matrix:
  $$
  H=\begin{pmatrix}
    G& \transp{\vec{0}}& 0\\
   {\vec{0}}&0&\alpha\\
    0&\alpha &0
  \end{pmatrix},
  $$
  with $\alpha\neq 0$ is an admissible Gram of
  dimension $k$. In that case, the last two dimensions form a
  hyperbolic plane fully orthogonal to the rest of the lattice.
\end{enumerate}

\begin{lemma}
Every admissible Gram matrix is invertible (which the convention that
the empty matrix is).
\end{lemma}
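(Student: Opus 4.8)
The plan is to proceed by induction on the dimension $k$, following exactly the three-case inductive definition of admissibility given just above the statement. The base case is immediate: the empty matrix of dimension $0$ is declared invertible by convention.

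For the inductive step, I would handle the two recursive cases in turn. In case~\eqref{case:normal}, we are handed $H=\begin{pmatrix} G&\transp{\vec{s}} \\ {\vec{s}}&t\end{pmatrix}$ where $G$ is admissible of dimension $k-1$, hence invertible by the induction hypothesis, and where $\det H\neq 0$ is explicitly part of the hypothesis. So there is nothing to prove here beyond invoking the definition — $\det H\neq 0$ is equivalent to $H$ being invertible. In case~\eqref{case:hyp}, $H$ is block-diagonal with an admissible block $G$ of dimension $k-2$ (invertible by induction) in the top-left corner and the $2\times 2$ block $\begin{pmatrix}0&\alpha\\\alpha&0\end{pmatrix}$ in the bottom-right corner. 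The determinant of the latter is $-\alpha^2\neq 0$ since $\alpha\neq 0$. By multiplicativity of the determinant over block-diagonal matrices, $\det H = \det G \cdot (-\alpha^2)\neq 0$, so $H$ is invertible. This completes the induction.

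There is essentially no obstacle here: the statement is a direct unwinding of the inductive definition together with the elementary fact that $\det\begin{pmatrix}0&\alpha\\\alpha&0\end{pmatrix}=-\alpha^2$. The only thing worth being slightly careful about is the reading of case~\eqref{case:hyp}: one must note that $G$ occupies the first $k-2$ coordinates and is itself admissible (so the induction hypothesis applies to it), and that the off-diagonal blocks connecting $G$ to the hyperbolic plane are genuinely zero, so that the matrix is block-diagonal and the determinant factorizes. I would therefore present the argument compactly as a two-line induction.

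\begin{proof}
  We argue by induction on the dimension $k$ of the admissible Gram matrix, following the inductive definition of admissibility. For $k=0$ the empty matrix is invertible by convention, which settles the base case. Now suppose the claim holds for all admissible Gram matrices of dimension less than $k$, and let $H$ be admissible of dimension $k$.

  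If $H$ arises via case~\eqref{case:normal}, then $H=\begin{pmatrix} G&\transp{\vec{s}} \\ {\vec{s}}&t\end{pmatrix}$ with $G$ admissible of dimension $k-1$ and $\det H\neq 0$ by hypothesis; hence $H$ is invertible.

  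If $H$ arises via case~\eqref{case:hyp}, then
  $$
  H=\begin{pmatrix}
    G& \transp{\vec{0}}& 0\\
    {\vec{0}}&0&\alpha\\
    0&\alpha &0
  \end{pmatrix}
  $$
  with $G$ admissible of dimension $k-2$ and $\alpha\neq 0$. Since $H$ is block-diagonal with blocks $G$ and $\begin{pmatrix}0&\alpha\\\alpha&0\end{pmatrix}$, we have $\det H=\det G\cdot(-\alpha^2)$. By the induction hypothesis $\det G\neq 0$, and $\alpha\neq 0$, so $\det H\neq 0$ and $H$ is invertible. This completes the induction.
\end{proof}
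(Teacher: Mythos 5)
Your proof is correct and follows essentially the same route as the paper: an induction along the inductive definition of admissibility, with the normal case being immediate from the hypothesis $\det H\neq 0$ and the hyperbolic case using $\det H=-\alpha^2\det G\neq 0$. The paper states this more tersely but the argument is identical; your explicit mention of block-diagonality and the induction hypothesis just makes the implicit steps visible.
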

\begin{proof}
  This is easily checked by considering the determinant. For
  case~\eqref{case:normal}, we 
  have $\det{H}\neq 0$ by assumption. For case~\eqref{case:hyp}, we see that
  $\det{H}=-\alpha^2\det{G}\neq 0.$
\end{proof}

We define the {\bf local potential}  $\pot{H}$ of an admissible Gram
matrix $H$ by the following induction:
\begin{enumerate}
\item In case~\eqref{case:empty}, we let $\pot{\emptyset}=1.$
\item In case~\eqref{case:normal}, we let $\pot{H}=|\det(H)|\,\pot{G}.$
\item In case~\eqref{case:hyp}, we let $\pot{H}=|\alpha|^3\,\det(G)^2\,\pot{G}.$
\end{enumerate}
This notion of potential is helpful for the complexity analysis of our
algorithm, see Section~\ref{sec:proof}.\\

It is clear that the generalized GSO of Section~\ref{sec:extendedGSO} can
be applied to orthogonalize any extra vector against a given
admissible Gram matrix $G$, with the a set of bad indices pointing to the hyperbolic
planes inside of $G$. Assume that $G$ has dimension $\kappa$ and let
$(\vec{v}_1, \cdots, \vec{v}_\kappa)$ be the corresponding lattice
basis. More precisely, this means that for all $(i,j)$, we have:
$$
G_{i,j}=\bb(\vec{v}_i,\vec{v}_j).
$$

Let $\vec{w}$ be an arbitrary extra vector and let $\vec{w}^*$ denote
its orthogonalized vector obtained by the generalized GSO. Thanks to
Equation~\eqref{eqn:GSOnew}, we know that:
$$
\vec{w}-\vec{w}^*=\sum_{i=1}^{\kappa}\theta_i\vec{v}_i,
$$
for some coefficient vector $(\theta)_{i=1}^{\kappa}.$ Furthermore,
since $\vec{w}^*$ is orthogonal to all the vectors $\vec{v}_i$,  we
easily see that:
\begin{equation}
  \label{eq:thetavector}
\begin{pmatrix}
  \bb(\vec{w},\vec{v}_1)\\
  \bb(\vec{w},\vec{v}_2)\\
  \vdots\\
  \bb(\vec{w},\vec{v}_{\kappa})
\end{pmatrix}=
G_{i,j}\cdot
\begin{pmatrix}
  \theta_1\\
  \theta_2\\
  \vdots\\
  \theta_{\kappa}
\end{pmatrix}.
\end{equation}

When $\bb(\vec{w}^*, \vec{w}^*)=0,$ we say that the vector $\vec{w}$
{\bf adheres} (or is {\bf adherent}) to $G$. If $\vec{w}$ adheres to
$G$ and $(\theta)_{i=1}^{\kappa}$ is an integral vector, we say that
$\vec{w}$ is a {\bf $G$-zero}.

\begin{lemma}\label{lem:nonadh}
  If $G$ is an admissible Gram matrix with corresponding basis
  $(\vec{v}_1, \cdots, \vec{v}_\kappa)$ and $\vec{w}$ a vector that
  does not adhere to $G$, then $(\vec{v}_1, \cdots,
  \vec{v}_\kappa,\vec{w})$ spans an admissible Gram matrix $H$ of
  dimension $\kappa+1$, with Gram determinant:
  $$\det(H)=\det(G)\cdot \bb(\vec{w}^*, \vec{w}^*).$$
\end{lemma}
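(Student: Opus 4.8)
The plan is to show that appending $\vec{w}$ to the basis of $G$ produces a Gram matrix that falls under case~\eqref{case:normal} of the definition of admissibility, which only requires checking that the resulting determinant is non-zero. First I would write
$$
H=\begin{pmatrix}
  G&\transp{\vec{s}} \\
  {\vec{s}}&t
\end{pmatrix},
\qquad \vec{s}=\bigl(\bb(\vec{w},\vec{v}_1),\ldots,\bb(\vec{w},\vec{v}_\kappa)\bigr),
\qquad t=\bb(\vec{w},\vec{w}),
$$
which is exactly the Gram matrix of $(\vec{v}_1,\ldots,\vec{v}_\kappa,\vec{w})$. Since $G$ is admissible, it is invertible by the preceding lemma, so the determinant of $H$ can be computed by the Schur complement formula: $\det(H)=\det(G)\cdot\bigl(t-\vec{s}\,G^{-1}\transp{\vec{s}}\bigr)$.

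Next I would identify the Schur complement $t-\vec{s}\,G^{-1}\transp{\vec{s}}$ with $\bb(\vec{w}^*,\vec{w}^*)$. From Equation~\eqref{eq:thetavector} we have $\transp{\vec{s}}=G\cdot\transp{\vec\theta}$ with $\vec\theta=(\theta_1,\ldots,\theta_\kappa)$, hence $G^{-1}\transp{\vec{s}}=\transp{\vec\theta}$ and $\vec{s}\,G^{-1}\transp{\vec{s}}=\vec{s}\cdot\transp{\vec\theta}=\sum_i\theta_i\,\bb(\vec{w},\vec{v}_i)$. On the other hand, by bilinearity and the orthogonality of $\vec{w}^*$ to every $\vec{v}_i$,
$$
\bb(\vec{w}^*,\vec{w}^*)=\bb\Bigl(\vec{w}-\textstyle\sum_i\theta_i\vec{v}_i,\ \vec{w}^*\Bigr)=\bb(\vec{w},\vec{w}^*)=\bb\Bigl(\vec{w},\ \vec{w}-\textstyle\sum_i\theta_i\vec{v}_i\Bigr)=t-\sum_i\theta_i\,\bb(\vec{w},\vec{v}_i),
$$
which is precisely the Schur complement. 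Therefore $\det(H)=\det(G)\cdot\bb(\vec{w}^*,\vec{w}^*)$.

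Finally, since $\vec{w}$ does not adhere to $G$ we have $\bb(\vec{w}^*,\vec{w}^*)\neq 0$ by definition of adherence, and $\det(G)\neq 0$ by admissibility of $G$; hence $\det(H)\neq 0$, so $H$ is admissible of dimension $\kappa+1$ via case~\eqref{case:normal}, with the claimed determinant formula. I do not expect a serious obstacle here: the only point requiring a little care is justifying that the generalized GSO of Section~\ref{sec:extendedGSO} really does produce a vector $\vec{w}^*$ orthogonal to all of $\vec{v}_1,\ldots,\vec{v}_\kappa$ (so that the manipulation $\bb(\vec{w}^*,\vec{w}^*)=\bb(\vec{w},\vec{w}^*)$ is legitimate), and that the coefficient vector $\vec\theta$ indeed satisfies Equation~\eqref{eq:thetavector}; both are established in that section, so the argument reduces to the Schur-complement identity above.
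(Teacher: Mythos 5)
Your proposal is correct, and it is essentially the argument the paper has in mind: the paper's proof is a one-line assertion that $H$ and its determinant "are easily computed from the definition of adherence," and your Schur-complement computation (equivalently, noting that passing from $(\vec{v}_1,\ldots,\vec{v}_\kappa,\vec{w})$ to $(\vec{v}_1,\ldots,\vec{v}_\kappa,\vec{w}^*)$ is a unit-determinant triangular change of basis that block-diagonalizes the Gram matrix) is exactly the standard way to carry that computation out, with non-adherence giving $\bb(\vec{w}^*,\vec{w}^*)\neq 0$ and hence admissibility via case~\eqref{case:normal}. No gap.
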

\begin{proof}
$H$ and its determinant are easily computed from the definition of adherence.
\end{proof}

\subsection{Dealing with $2\times 2$ blocks}
In the LLL algorithm, each (projected) $2\times 2$ block is reduced
using a simple variant of Gauss-Hermite-Lagrange algorithm induced by
the Lov{\`a}sz condition. We now recall the Gauss-Hermite-Lagrange
algorithm together with its variant and show the relationship with the
reduction of a positive definite binary quadratic form.

Let $\vec{u}$ and $\vec{v}$ denote two linearly independent
vectors. The geometry of the two dimensional lattice they generate is
fully described by the three numbers $N_1=\Norm{\vec{u}}^2$,
$N_2=\Norm{\vec{v}}^2$ and $S=\Scal{\vec{u}}{\vec{v}}.$ In our lattice
reduction application, $N_1$, $N_2$ and $S$ are rational
numbers\footnote{We leave the question of adapting floating point LLL
  techniques to the indefinite case open.}. We say that the basis
$(\vec{u},\vec{v})$ is Gauss-Hermite-Lagrange reduced if and only if:
\begin{align}
  N_1& \leq N_2 \quad\mbox{and}\\
 |S| & \leq N_1/2.
\end{align}
Given a positive real $\gamma_n<1$ we say that the basis is
$\gamma_n$-Lov{\`a}sz reduced if and only if:
\begin{align}
 \gamma_n\, N_1& < N_2 \quad\mbox{and}\\
 |S| & \leq N_1/2.
\end{align}

The algorithm is standard and consists of a sequence of elementary
reduction steps. In each step, we assume that $N_1<N_2$ and choose
$\lambda$ as the closest integer to $S/N_1$. We then compute a new
norm and scalar product:
\begin{align*}
  S'&=S-\lambda &\ \mbox{and}\\
  N'_2&= N_2-2\lambda\,S\lambda^2\,N_1.&
\end{align*}
If $N'_2$ is lower than $N_1$ (by a factor $\gamma_n$ for Lov{\`a}sz),
we exchange the norms and repeat.

To see the link with binary quadratic forms, let us define:
\begin{equation}
  Q(x,y)=\Norm{x\vec{u}+y\vec{v}}=N_1\,x^2+2\,S\,xy+N_2\,y^2.
\end{equation}
We can remark that $Q(x,y)$ is a binary quadratic form. Moreover,
since it is induced by a norm, it is positive definite.
Looking up the definition of reduction for a positive definite form
$a\,x^2+b\,xy+c\,y^2,$ we find the condition $|b|\leq a
\leq c.$ In our case, this is precisely the same as the
Gauss-Hermite-Lagrange reduction. As a consequence, in our algorithm
to reduce a lattice described by $N_1$, $N_2$ and $S$, we simply
apply a reduction step to the definite quadratic form $(N_1,2S,N_2)$. Since, this
complete reduction does not check the Lov{\`a}sz condition, we defer
this check to the main loop where we decide whether the reduction is
worth to apply or not. \\

Of course, this is equivalent to using the
above lattice algorithm directly. However, this point of view is
especially fruitful to treat the indefinite case.

\subsubsection{Quadratic indefinite forms}
Now, when reducing an indefinite lattice, the projected $2\times 2$
blocks are not necessarily positive definitive. In previous works, the
Lov\'asz condition was simply adapted by adding absolute values around
norms and, possibly, making special treatment for isotropic vectors. However,
this strongly differs from the notion of reduction that is usually
encountered when dealing with indefinite quadratic forms and that we
described in Section~\ref{sec:binforms}. Here, we correct this
discrepancy, thus benefiting from the theoretical framework of
indefinite binary quadratic forms. Note that we also need the
cases where $\Delta$ is square that are presented in
Section~\ref{sec:missing}, since nothing prevents them from happening
during the overall lattice reduction.

This modification is easily achieved by copying the approach we use for the
definite case. Simply compute $N_1=\Scal{\vec{u}}{\vec{u}}$,
$N_2=\Scal{\vec{v}}{\vec{v}}$ and $S=\Scal{\vec{u}}{\vec{v}}$ and
reduce the form $(N_1,2S,N_2)$. The main difference here is that after
reaching the first reduced form, we can continue reducing to find a
better one. To make sure that our lattice reduction algorithm remain
polynomial time, we bound the number of extra reduction steps we allow
by some constant. We keep the best option among these extra steps
before deciding whether the reduction step is worthwhile, i.e., we
have make significant local progress by applying it.

More precisely, when the initial $N_1$ is transformed into an updated
$N'_1$, we consider that the reduction step is worthwhile when:
$$
|N'_1|< \gamma_n\,|N_1|.
$$
This mimics the idea of $\gamma_n$-Lov{\`a}sz reduction in the
indefinite case.

\subsection{Dealing with hyperbolic planes}\label{sec:swaphyp}
While the previous Section deals with $2\times 2$ blocks of vectors,
it leaves open the case of the mutual reduction of two consecutive
hyperbolic planes or of a single hyperbolic plane and an extra
vector. Again, we take a positive real $\gamma_h< 1$ as parameter.

For two hyperbolic planes, due to the admissibility requirement, it
suffices to look at the case:
$$
\begin{pmatrix}
  0&\alpha &0 &0 \\
  \alpha &0 &0& 0\\
  0 & 0& 0&\beta \\
  0 & 0& \beta&0\\
\end{pmatrix}.
$$
If $\gamma_h\,|\alpha|\leq |\beta|$ we keep the two blocks in place. Otherwise,
we exchange their order.

When an hyperbolic plane is followed by a vector, up to scaling, it
suffices to consider the situation:
$$
\begin{pmatrix}
  0&1 &\alpha \\
  1 &0 &\beta\\
  \alpha& \beta& \gamma
\end{pmatrix}.
$$
If, after size reduction modulo $1$, we have $\alpha\neq 0$, we put
the vector before the hyperbolic plane. After that, the first vector
of the hyperbolic plane is no longer fully orthogonal to the previous
vectors and it can directly be incorporated into an admissible Gram matrix. If
$\alpha=0$ and $\beta\neq 0$, we exchange the two vectors of the
hyperbolic plane to recover the situation $\alpha\neq 0$.
\\

Finally, when $\alpha=\beta=0$, we reorder when $|\gamma|<\gamma_h$ and leave the
situation as it is otherwise. With a vector followed by an hyperbolic plane, we are in a similar
situation with the additional guarantee that $\alpha=\beta=0$. Thus, we
leave the situation as it is when $\gamma_h\,|\gamma|\leq 1$  and reorder when
$|\gamma|>\gamma_h^{-1}$.

We normally choose $\gamma_h=\gamma_n$, however, we also spend a
moment considering the choice $\gamma_h=1.$

\subsection{On size reduction}\label{sec:sizred}
Size reduction is an important component of LLL. Its goal is to make
sure that the coefficients of any of the currently considered lattice
basis vectors 
in the current GSO basis remain bounded. Without this control, we could
spiral into a basis satisfying all the Lov{\`a}sz conditions, thus
having a nicely bounded GSO, yet having unwieldy large
coefficients. By adding ``long distance'' control to the coefficients,
this situation is neatly avoiding.

With the ordinary lattices, the size reduction condition\footnote{This
condition can be slightly weakened when desired. This is typically the
case with floating point implementations of LLL.}:
$$
\forall j<i: \quad |\Scal{\vec{v}_i}{ \vec{v}^{*}_j}|\leq \Norm{\vec{v}^{*}_j}/2,
$$
looks pretty boring. Indeed, it looks just like extending the same
condition specialized to $j=i-1$ that forms one of the two Lov{\`a}sz
conditions. 

When turning to indefinite lattices, we still need a similar
condition. However, we enforce it with two modifications:
\begin{itemize}
  \item First, we only perform size reduction for indices
    $j<i-1$. For $j=i-1$, we can do size reduction when the projected
    block at positions$(j,i)$ is definite. If the block is indefinite,
    we replace size-reduction by a clean-up operation that is compatible
    with the reduction of indefinite blocks. Otherwise, we would
    destroy the reducedness of the block, thus entering an infinite loop.
    \item Second, the formula for size reduction needs to be adapted
      for the bad indices of generalized GSO. This is easily done.
\end{itemize}

\section{The new indefinite LLL}\label{sec:ouralg}
\subsection{Preliminary analysis: improvement of Simon's
  algorithm}\label{sec:prelimSimon}

Before giving our full algorithm for indefinite lattices, it is useful
to first follow an approach similar to Simon's and consider what happens if no
isotropic GSO vectors arise during the course of the lattice
reduction. As this is just a warm-up, we do not give an
algorithm. Instead, we assume that it outputs a lattice
with no isotropic GSO vectors and where every two-dimensional
projected block is reduced. The relevant difference here is that for
indefinite blocks, we use the notion of reduction coming from
quadratic form instead of the modified Lov{\'a}sz condition with
absolute values.

In this situation, we are interested by
the quality of the output basis and especially by relationship between
the value of the ``squared-norm'' of the first vector and the Gram-determinant of the
lattice.

For projected two-dimensional blocks, two situations arise:
\begin{itemize}
  \item The block at positions $(i,i+1)$ is positive/negative definite and we 
    have:
    \begin{equation} \label{case:definite}
      |\bb(\vec{v}^{*}_i, \vec{v}^{*}_i)|\leq C_{\mbox{LLL}}^2|\bb(\vec{v}^{*}_{i+1}, \vec{v}^{*}_{i+1})|.
    \end{equation}

  \item The block at positions $(i,i+1)$ is indefinite (and reduced) and we 
    have:
    \begin{equation}\label{case:indefinite}
    |\bb(\vec{v}^{*}_i, \vec{v}^{*}_i)|\leq |\bb(\vec{v}^{*}_{i+1}, \vec{v}^{*}_{i+1})|.
    \end{equation}
\end{itemize}
\vspace{0.5cm}

As a consequence, we obtain that:
    \begin{equation}\label{case:indefboundlocal}
    |\bb(\vec{v}_1, \vec{v}_1)|\leq C_{\mbox{LLL}}^{2N_i} |\bb(\vec{v}^{*}_{i+1}, \vec{v}^{*}_{i+1})|,
    \end{equation}
where $N_i$ denotes the number of definite blocks among the $i-1$
blocks in the range $[1\cdots i]$.
In turn, this implies that:
\begin{equation}\label{case:indefboundglobal}
  |\bb(\vec{v}_1, \vec{v}_1)|^{d}\leq
  C_{\mbox{LLL}}^{2\sum_{i=2}^{d}N_i} \cdot |\det{G_L}|.
\end{equation}

Since we want the upper bound on $ |\bb(\vec{v}_1, \vec{v}_1)|$ to
be as tight as possible, we would like $\sum_{i=2}^{d}N_i$ to be
small. In essence, this is similar to~\cite[Remark
1.5]{Simon05}. However, we want to quantify the potential improvement.

Ideally, if we could achieve $\sum_{i=2}^{d}N_i=0$, the bound from
Equation~\eqref{case:indefinite} would become extremely nice. However, this
is not possible for all lattices. In fact, the signature of the
lattice allows us to compute upper and lower bounds on
$\sum_{i=2}^{d}N_i$. Indeed, using Theorem~\ref{th:GSOEigen}, we can
relate the signature of the lattice to the number of positive and
negative diagonal entries in its Gram-GSO. Since we assume that we are not encountering any isotropic vector, we
can use ordinary GSO (or equivalently assume that the number of
hyperbolic planes is $0$). It is easy to remark that when two
consecutive diagonal entries in the Gram-GSO have the same sign, the
corresponding projected $2\times 2$ block is (positive or negative)
definite and when the signs are opposite the block is indefinite.\\

For simplicity, assume that the number of positive eigenvalue $n^{+}$ is not
smaller than the number of negative eigenvalues $n^{-}$. Then, the
configuration that minimizes $\sum_{i=2}^{d}N_i$ occurs when the
Gram-GSO starts with a positive entries and alternates signs until all
the $n^{-}$ negative entries have been used. In that scenario, we have
$2\,n^{-}$ consecutive indefinite blocks\footnote{Unless, $n^{-}=n^{+}$,
  in which case all of the $2\times 2$ blocks are indefinite. But
  there are only $2\,n^{-}-1$ total blocks in total.}. As a consequence,
in this situation, we achieve:
\begin{align*}
  \sum_{i=2}^{d}N_i &=
  \sum_{i=2\,(n^{-}+1)}^{d}i-2\,n^{-}-1=\frac{(d-2n^{-}-1)\cdot(d-2n^{-})}{2}\\
                    &=\frac{(n^{+}-n^{-}-1)\cdot(n^{+}-n^{-})}{2}
  =\frac{\sigma\cdot(\sigma-1)}{2},
\end{align*}
where $\sigma$ is the signature of the lattice.

As a consequence, the exponent of $C_{\mbox{LLL}}$ in
Equation~\eqref{case:indefboundglobal} can be as low as
$\sigma\cdot(\sigma-1)$ instead of the much larger $n\cdot(n-1)$ that
appears for ordinary lattices\footnote{For doing the comparison, remark that
  Equation~\eqref{case:indefboundglobal} is essentially the bound of
  Theorem~\ref{th:LLL} raised to the power $2n$.}.\\

At the other extreme, we maximize $\sum_{i=2}^{d}N_i$ when we have
$n^{+}$ positive entries followed by $n^{-}$ negative ones. In that
situation, when $i\leq n^{+}$ we have $N_i=i-1$ and when $i>n^{+}$ we
have $N_i=i-2$. Thus,
\begin{equation*}
  \sum_{i=2}^{d}N_i =\sum_{i=2}^{d}(i-1)-n^{-}
                    =\frac{d\cdot(d-1)}{2}-n^{-}
                    =\frac{d\cdot(d-2)+\sigma}{2}.
\end{equation*}

Thus, we potentially have a major gain when the sum is minimized.
Note that even when the sum is maximized, for an indefinite form we
necessarily have $\sigma\leq d-2$. This already improves the bound
from the last paragraph of~\cite[Theorem~1.4]{Simon05}.\\

As a direct consequence of the above, we see that one of the important
goals of our indefinite reduction algorithm is to favor sign
alternance in the Gram-GSO as much as feasible. Of course, in the 
general case, we also need to account for the contribution of hyperbolic
planes to the bound.

\subsection{Designing the algorithm}
Taking into account the elements we have gathered, our design goals
are the following:
\begin{enumerate}
\item Follow the overall structure of Algorithm~\ref{alg:LLL}.
\item Replace Gram-Schmidt orthogonalization by generalized GSO.
\item Design a local reduction strategy with the following subgoals:
  \begin{enumerate}
  \item Make use of the reduction of binary quadratic forms to deal
    with $2\times 2$ blocks.
  \item Aim at regrouping sign alternance in the GSO toward the small
    indices.
  \item Forbid isotropic vectors unless they appear as a pair forming
    a hyperbolic plane.
    \item Swap hyperbolic planes around as proposed in Section~\ref{sec:swaphyp}.
  \end{enumerate}
\item Modify size-reduction to avoid unreducing $2\times 2$ blocks.
\end{enumerate}

We give the high-level description of our algorithm as
Algorithm~\ref{alg:indefLLL}. Compared to Algorithm~\ref{alg:LLL}, we
change the meaning of the variable $k$. In LLL, the current $2\times
2$ block being considered is formed of the vectors at position $k$ and
$k+1$. In our algorithm, for the sake of clarity, it is preferable to
consider the block consisting of the vectors at position $k-1$ and $k$.\\

\begin{algorithm}
\DontPrintSemicolon
\SetKw{loopbreak}{break}
\SetKw{continue}{continue}
\KwData{Description of input lattice and parameter $\gamma_0$
  (typically $0.99$) }
\KwResult{Unimodular $U$ that reduces the input lattice}
\tcp{All operations implicitly updates $U$ and $L_c$ as necessary}
Set {\it Current position} $k$ to $1$\;
Copy input $L$ to current lattice $L_c$ (with vectors $\vec{v}_i$)\;
\While{$k \leq \dim(L_c)$}{
  {\bf Let} $G$ be the  Gram matrix of $(\vec{v}_1,\cdots,\vec{v}_{k-1})$\;
  \For{$i\in [k\cdots \dim(L_c)]$}{
    {\bf Orthogonalize and size reduce} $\vec{v}_i$ with respect to $G$\; 
    \lIf{$\vec{v}_i$ is not a $G$-zero}{{\bf Report} position $i$ and \loopbreak}
    \lIf{$i>k$ and $\bb(\vec{v}_{i-1}, \vec{v}_{i})\neq 0$}{{\bf Report} pair $(i-1,i)$ and \loopbreak}
  }
  \uIf{Nothing reported}{{\bf Find} and {\bf Report} a pair $(i,j)$ such
    $\bb(\vec{v}_{i}, \vec{v}_{j})\neq 0\ \ $\tcc*[h]{$k\leq i<j$}}
  \lIf{Nothing reported}{{\bf Return} $U$ and
    $L_c$  --- {\bf Terminate} algorithm.}
  \uIf{Position $i$ reported}{
      {\bf Cyclic shift} vector in position $i$ to position $k$\;
      \lIf{$k=1$}{Let $k=2$ and \continue}
     \uIf{$(k-2,k-1)$ is hyperbolic}{{\bf Apply plane reduction} to
       hyperbolic plane and vector at $k$}
     \Else{{\bf Apply vector reduction} to vector at position $k$\;
     }
     }
    \Else(\tcc*[h]{Pair $(i,j)$ reported}){
      {\bf Cyclic shift} vector in position $i$ to position $k$\; 
      {\bf Cyclic shift} vector in position $j$ to position $k+1$\;
      \lIf{$k=1$}{Let $k=3$ and \continue}
      {\bf Apply plane reduction} to hyperbolic plane
      $(k,k+1)$ and vector~$k-1$\;
    }
      {\bf Change} $k$ as dictated by reduction and let $k=\max(1,k)$\; 
    }
    {\bf Return} $U$ and $L_c$\;
\caption{\label{alg:indefLLL}  High-level description of indefinite LLL}
\end{algorithm}

Most of the various subroutines are easy to write from our earlier
section and we do not write them explicitly. Here is a quick
mapping of the correspondances:
\begin{itemize}
\item {\bf Orthogonalize} refers to the generalized GSO of
  Section~\ref{sec:extendedGSO}. As noted in
  Section~\ref{sec:admissible} it can be applied here if $G$ is
  admissible. We check this in the proof Section~\ref{sec:proof}.
\item {\bf Size reduce} $\vec{v}_i$ with respect to $G$ refers to
  Section~\ref{sec:sizred} and performs both the size-reduction and
  clean-up of that section. More precisely, it anticipates that
  $\vec{v}_i$ will be moved to position $k$. Because of this
  $\vec{v}_i$ is size-reduced relatively to every $\vec{v}_j$ with
  $j<k-1$. For $\vec{v}_{k-1}$, we use size reduction if (after the
  move) the projected block is definite or degenerate and clean-up if
  it is indefinite. We give the pseudo-code of the clean-up operation
  as Algorithm~\ref{alg:cleanup}
\item {\bf Apply plane reduction} refers to the description provided
  in Section~\ref{sec:swaphyp}, using either $\gamma_h=\gamma_0$ or
  $\gamma_h=1$. If a swap is performed, we return the indication to
  move $k$ so that $k-1$ points before the vector or block that has
  been moved back.
  
\item {\bf Cyclic shift} of vectors, giving position $i$ and $k$
  simply replaces the vectors $(\vec{v}_k,\cdots, \vec{v}_{i})$ by
  $(\vec{v}_{i},\vec{v}_k,\cdots, \vec{v}_{i-1}).$
\end{itemize}
For more details about these operations, the reader is invited to read
the Magma code provided in Appendix~\ref{sec:magmacode}. Note that the
routine  {\bf Apply plane reduction} is inlined and duplicated in the code.

Concerning {\bf Apply vector reduction}, the situation is more subtle.
Indeed, we need a strategy to enforce sign alternance. We provide two
versions, a first one with no sign alternance strategy is given as
Algorithm~\ref{alg:red-nosign} and another one with a sign alternance
strategy in Algorithm~\ref{alg:red-sign}. In the first version, we simply aim at
decreasing the first vector in any $2\times 2$ block by a factor of at
least $\gamma_0$, while avoiding to keep any $G$-zero. We also use the
ability to continue reduction beyond the first reduced form for
indefinite blocks to get better vectors. In the second version, we
favor sign alternance. More precisely, after the block first reaches
reducedness, if  the sign alternance is already present, we only
perform vector improvements that keep it in place. When the block is
not reduced on input, we look at the first reduced form $(a,b,c)$ we reach. If
$a$ has the right sign, we keep it. If not, since $c$ has a different
sign from $a$, we check whether $(c,b,a)$
is satisfactory. More precisely, we ask for $|c|$ to be
$\gamma_0$-short compared to the first input ``squared-norm'' and we
further insist that $4c^2\leq b^2-4ac$. Equivalently, this means that
the corresponding ``squared-norm'' is shorter that the square root of
the absolute value of the $2\times 2$ projected lattice determinant.

\begin{algorithm}
\DontPrintSemicolon
\SetKw{loopbreak}{break}
\SetKw{continue}{continue}
\KwData{Initial sublattice $(\vec{v}_1,\cdots, \vec{v}_{\kappa})$ and
  extra vector $\vec{v}_i$.}
\KwResult{Cleaned-up $\vec{v}_i$ }
Let $N_1=\bb(\vec{v}_{\kappa}^*, \vec{v}_{\kappa}^*)$\;  
Let $S=\bb(\vec{v}_{\kappa}^*, \vec{v}_{i})$\; 
Let $N_2=\bb(\vec{v}_{i}^*, \vec{v}_{i}^*)+S^2/N_1$\;
Let $\Delta=S^2-N_1*N_2$\;
\lIf{$\Delta\leq 0$}{Call ordinary size-reduction}
\uIf{$S=0$ and $N_1+N_2=0$}{Let $\lambda=0$}
\uElseIf{$N_2\neq 0$ or $|S|\neq \sqrt{\Delta}$ or $|N_1|\neq
  \sqrt{\Delta}$}{
  \uIf{$|N_1|>\sqrt{\Delta}$}{Let $\lambda=\lfloor -S/N_1\rceil$}
  \Else {
    Let $\lambda=(\sqrt{\Delta}-S)/N_1$\;
    \leIf(\tcc*[h]{Rnd towards $0$}){$N_1>0$}{Let $\lambda=\lfloor \lambda\rfloor$}{Let $\lambda=\lceil \lambda\rceil$}
    }
  }
  \Else(\tcc*[h]{$N_2= 0$ and $|S|= \sqrt{\Delta}$ and $|N_1|=
  \sqrt{\Delta}$}){
    Let $\lambda=\lfloor -S/N_1\rceil$
    }
Let $\vec{v}_i=\vec{v}_i+\lambda\,\vec{v}_{\kappa}$\;
\caption{\label{alg:cleanup}  Clean-Up part of size-reduction}
\end{algorithm}

\begin{algorithm}
\DontPrintSemicolon
\SetKw{loopbreak}{break}
\SetKw{continue}{continue}
\SetKwInput{KwParam}{Parameters}

\KwData{Initial sublattice $(\vec{v}_1,\cdots, \vec{v}_{k-1})$ and
  next vector $\vec{v}_k$}
\KwParam{ $\gamma_0<1$ and max extra rounds count $M$.}
\KwResult{Next change for $k$}
Let $G$ be the Gram matrix of  $(\vec{v}_1,\cdots, \vec{v}_{k-2})$
Let $N_1=\bb(\vec{v}_{k-1}^*, \vec{v}_{k-1}^*)$\;  
Let $S=\bb(\vec{v}_{k-1}^*, \vec{v}_{k})$\; 
Let $N_2=\bb(\vec{v}_{k}^*, \vec{v}_{k}^*)+S^2/N_1$\;
Let $\Delta=S^2-N_1*N_2$\;
\lIf{$\Delta\leq 0$}{Call $\gamma_0$-Lov{\'a}sz reduction}
Let $a=N_1$, $b=2\,S$ and $c=N_2$\;
\uIf{$(a,b,c)$ is not already reduced}{Apply reduction steps until
  $(a',b',c')$ is reduced or $|a'|\leq |a|/2$\;
  \lIf{$|a'|<\gamma_0\,|a|$}{Mark {\bf worthwhile}}
  \lIf{$|a'|\leq|a|$ and $c=0$ and $c'\neq 0$}{Mark {\bf worthwhile}}
}
\Else{Apply up to $M$ reduction steps with intermediates $(a',b',c')$
  \;
    \lIf{$|a'|<\gamma_0\,|a|$}{Mark {\bf worthwhile} and stop the
      reduction loop}
  }
  \lIf{Transformation not worthwhile}{{\bf Return} $k$-change$=+1$}
  Let $\vec{w}_{k-1}$ and $\vec{w}_{k}$ be the new vectors after
  applying the transform\;
  {\bf Size reduce} $\vec{w}_{k-1}$ with respect to $G$\;
  \If{$\vec{w}_{k-1}$ is not a $G$-zero and  $(\vec{w}_{k-1},\vec{w}_{k})\neq (\pm\vec{v}_{k-1},\pm\vec{v}_{k})$}
    {{\bf Replace} $(\vec{v}_{k-1},\vec{v}_{k})$ by
    $(\vec{w}_{k-1},\vec{w}_{k})$ and {\bf Return} $k$-change$=-1$}
  \If{$\vec{w}_{k-1}$ is a $G$-zero and  $(\vec{w}_{k},\vec{w}_{k-1})\neq (\pm\vec{v}_{k-1},\pm\vec{v}_{k})$}
    {{\bf Replace} $(\vec{v}_{k-1},\vec{v}_{k})$ by
      $(\vec{w}_{k},\vec{w}_{k-1})$ and {\bf Return} $k$-change$=-1$}
    {\bf Return} $k$-change$=+1$ \tcc*[h]{Fell through the tests}
  \caption{\label{alg:red-nosign}  Vector reduction against previous (no
sign strategy)}
\end{algorithm}

\begin{algorithm}
\DontPrintSemicolon
\SetKw{loopbreak}{break}
\SetKw{continue}{continue}
\tcc{ Replacement for lines 7--14 of Algorithm~\ref{alg:red-nosign}}
Determine position $\ell$, right before $k-1$ skipping any hyperbolic
plane\;
\lIf{$\ell=0$}{Use Algorithm~\ref{alg:red-nosign}}
\uIf{$(a,b,c)$ is not already reduced}{Apply reduction steps until
  $(a',b',c')$ is reduced or \\$|a'|\leq |a|/2$ and $a'$ has the right
  sign (either $a'=0$ or opposite to $\vec{v}_{\ell}^{*}$) \;
  \If{$a$ of wrong sign and $c'\neq 0$ and $|c'|<\gamma_0\,|a|$ and
    $c'\leq \sqrt{\Delta}$}{Swap $a'$ and $c'$}
  \lIf{$|a'|<\gamma_0\,|a|$}{Mark {\bf worthwhile}}
  \lIf{$|a'|\leq|a|$ and $c=0$ and $c'\neq 0$}{Mark {\bf worthwhile}}
}
\Else{Apply up to $M$ reduction steps with intermediates $(a',b',c')$
  \;
    \If{$a'$ has right sign (or is zero) and $|a'|<\gamma_0\,|a|$}{Mark {\bf worthwhile} and stop the
      reduction loop}
  }
  \caption{\label{alg:red-sign}  Vector reduction with
sign strategy.}
\end{algorithm}

\subsection{Proofs: complexity and approximation
  factor} \label{sec:proof}

In this Section, we want to prove our main theorem, as stated
below.
\begin{theorem}\label{th:IndefLLL}
  For any lattice $L$ of dimension $d$ of rank $\ell$, Algorithm~\ref{alg:indefLLL}
  outputs, in polynomial time, a basis $(\vec{v}_1,
  \cdots,\vec{v}_d),$ such that:
  \begin{itemize}
    \item If $\vec{v}_1$ is non-isotropic then:
      $$
      |\bb(\vec{v}_1, \vec{v}_1)|^{\ell}\leq
      \gamma_0^{-\ell(\ell-1)}\,D_{LLL}^{\sum_{i=1}^{\ell}N_i}\,|\det_{\neq 0}(G_L)|.
      $$
    \item  If $\vec{v}_1$ is isotropic then $(\vec{v}_1, \vec{v}_2)$
      is a hyperbolic plane and:
            $$
      |\bb(\vec{v}_1, \vec{v}_2)|^{\ell}\leq
      \gamma_0^{-\ell(\ell-1)}\,D_{LLL}^{\sum_{i=1}^{\ell}N_i}\,|\det_{\neq
        0}(G_L)|.
      $$
    \end{itemize}
  Where $\gamma_{0}=1-\epsilon$ with $\epsilon>0$  is a parameter of
  the algorithm and 
  $
  D_{LLL}=\frac{\gamma_0^2}{\gamma_0-0.25}.
  $

  As before, $N_i$ stands for the number of definite $2\times 2$
  projected blocks up to position~$i$.
  Furthermore, for all pairs $(i,j)$ with $j>\ell$ we have:
  $$
  \bb(\vec{v}_i, \vec{v}_j)=0.
  $$
\end{theorem}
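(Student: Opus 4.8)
The plan is to follow the classical LLL analysis, with its three pillars --- ``Gram--Schmidt is well-defined'', ``a potential strictly decreases'', ``telescoping of the block inequalities'' --- replaced by their indefinite analogues, treating running time and approximation factor separately and the tail-orthogonality claim last.

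\emph{Step 1: admissibility invariant.} I would first prove by induction on the number of elementary operations that, at the top of the while-loop, the partial Gram matrix $G$ of $(\vec v_1,\dots,\vec v_{k-1})$ is always admissible in the sense of Section~\ref{sec:admissible}. The base case is $k=1$ with $G$ the empty matrix; the index $k$ is increased only after either a reported non-$G$-zero vector has been shifted to position $k$, which by Lemma~\ref{lem:nonadh} enlarges $G$ by a row of type~\eqref{case:normal}, or a reported hyperbolic pair has been shifted to positions $(k,k+1)$, enlarging $G$ by a block of type~\eqref{case:hyp}; and $k$ is decreased only by truncating $G$ at an index that is never a bad index (the code moves $k$ to just before the block or plane that was sent back), and truncating an admissible matrix at a non-bad index keeps it admissible. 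As a corollary the generalized GSO of Section~\ref{sec:extendedGSO} always applies, and none of the divisions in Algorithms~\ref{alg:cleanup}--\ref{alg:red-sign} is by zero, their denominators being $\bb(\vec v_j^*,\vec v_j^*)$ for good $j$ or $\bb(\vec v_j^*,\vec v_{j+1}^*)$ for bad $j$, all nonzero by admissibility.

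\emph{Step 2: termination and polynomial running time.} I would attach to the current state the global potential $\Pi=\prod_j \pot{G_j}$, the product of the local potentials of Section~\ref{sec:admissible} over the prefixes $G_j$ of $(\vec v_1,\dots,\vec v_j)$ that have already been made admissible; by the induction defining $\pot{\cdot}$ each $\pot{G_j}$ is a product of absolute values of determinants of integral matrices, hence a positive integer, and $\Pi$ is bounded above by an explicit function of the input bit-size. The crux is the potential lemma: every operation that the algorithm marks \emph{worthwhile} --- a $2\times2$ reduction step with $|N_1'|<\gamma_0|N_1|$, a hyperbolic swap triggered by $\gamma_h|\alpha|>|\beta|$, or the associated plane/vector swaps --- multiplies $\Pi$ by a factor at most $\max(\gamma_0,\gamma_h)<1$ (the unimodular transform changes exactly one $\det G_j$, by the ratio witnessing worthwhileness), whereas orthogonalization, size reduction, clean-up, cyclic shifts that are undone by the matching change of $k$, and the degenerate ($\Delta$ a square) steps of Section~\ref{sec:missing} leave $\Pi$ unchanged, being basis re-expressions inside fixed prefixes or one-time $\gcd$-like contractions. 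Hence there are $O(\log_{1/\gamma_0}\Pi)=\mathrm{poly}(\text{input size})$ worthwhile steps, only $O(d)$ cheap steps occur between two consecutive worthwhile ones, and size reduction keeps every coordinate of $L_c$ and $U$ polynomially bounded; the standard bookkeeping then gives the polynomial time bound.

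\emph{Step 3: the approximation factor.} On return, no worthwhile local move exists, so every consecutive $2\times2$ block in positions $\le\ell$ is stable. Read through Section~\ref{sec:binforms}, stability says: a definite stable block at $(i,i+1)$ is $\gamma_0$-Lov\'asz-reduced, and a routine computation from $|b|\le|a|$ together with $\gamma_0|a|\le|c|$ (plus the $\gamma_0$-slack incurred on each side because reduction stops once a worthwhile step exists) gives $|\bb(\vec v_i^*,\vec v_i^*)|\le D_{LLL}\,|\bb(\vec v_{i+1}^*,\vec v_{i+1}^*)|$ with $D_{LLL}=\gamma_0^2/(\gamma_0-0.25)$; an indefinite stable block is genuinely reduced in the sense of~\eqref{eqn:redcond}, so $|a|\le|c|$ and $|\bb(\vec v_i^*,\vec v_i^*)|\le|\bb(\vec v_{i+1}^*,\vec v_{i+1}^*)|$ with no loss; and a hyperbolic plane enters only through its determinant, exactly as in the generalized-GSO determinant formula of Section~\ref{sec:extendedGSO}. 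Chaining these along the first $\ell$ positions as in~\eqref{case:indefboundlocal}--\eqref{case:indefboundglobal} --- one inequality $|\bb(\vec v_1,\vec v_1)|\le(\text{accumulated factor})\cdot(j\text{-th generalized-GSO factor})$ for each $j\le\ell$, then multiply --- the product of the $\ell$ generalized-GSO factors is $|\det_{\neq 0}(G_L)|$, the accumulated $D_{LLL}$-exponent telescopes to $\sum_{i=1}^\ell N_i$, and the $\gamma_0$-slacks (two per block boundary, at most $j-1$ boundaries below $j$) accumulate to $\gamma_0^{-2\sum_{j=1}^\ell(j-1)}=\gamma_0^{-\ell(\ell-1)}$, yielding the first displayed bound. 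If $\vec v_1$ is isotropic, then stability of the leading block together with the plane-reduction rules of Section~\ref{sec:swaphyp} forces $(\vec v_1,\vec v_2)$ to be a hyperbolic plane orthogonal to the rest (an adherent-but-not-hyperbolic configuration at the front is forbidden by admissibility or removed by a worthwhile move), and $|\bb(\vec v_1,\vec v_2)|^2$ plays the role of the first generalized-GSO factor in the same chain, giving the second bound.

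\emph{Step 4: the kernel tail, and the main obstacle.} The algorithm halts only when the inner search reports nothing: no position $i\ge k$ carries a non-$G$-zero, no two consecutive tail vectors form a hyperbolic plane, and no pair $(i,j)$ with $k\le i<j$ has $\bb(\vec v_i,\vec v_j)\ne 0$. By Step~1 the matrix $G$ then has full rank $k-1$, and $k-1=\ell$ since $\rank{G_L}=\ell$ and $G$ is the Gram matrix of a sub-family; each remaining $\vec v_i$ is a $G$-zero, so its ``orthogonalize and size reduce'' step has already replaced it by a lattice vector orthogonal to $\vec v_1,\dots,\vec v_\ell$, isotropic, and --- by the no-nonzero-pair condition --- orthogonal to every other tail vector, hence lying in $\ker(\bb_L)$. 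There are $d-\ell$ of them, recovering the decomposition of Theorem~\ref{thm:zerofactor} and in particular $\bb(\vec v_i,\vec v_j)=0$ whenever $j>\ell$. I expect the real difficulty to be not any single estimate but the combinatorial bookkeeping that makes Steps~2 and~3 simultaneously airtight: checking, configuration by configuration against Algorithms~\ref{alg:cleanup}--\ref{alg:red-sign} and the swap rules of Section~\ref{sec:swaphyp}, that the worthwhile/not-worthwhile thresholds, the sign-alternance strategy, the replacement of size reduction against $\vec v_{k-1}$ by clean-up exactly on indefinite blocks, and the hyperbolic swaps never conspire into a cycle of non-worthwhile moves, and that genuine progress is still made in the degenerate $\Delta$-square cases.
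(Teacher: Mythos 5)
Your overall architecture --- admissibility invariant, potential-based termination, chaining of per-block inequalities, kernel tail --- is the same as the paper's, and Steps 1, 2 and 4 match its proof in substance (the paper tracks the potential of the largest admissible prefix, re-pointed at most $d$ times, rather than your product over admissible prefixes, but that is bookkeeping). The genuine gap is in Step 3, at exactly the point that carries the whole improvement over Simon: your claim that an indefinite block which is reduced in the sense of~\eqref{eqn:redcond} satisfies $|a|\le|c|$, hence $|\bb(\vec v_i^*,\vec v_i^*)|\le|\bb(\vec v_{i+1}^*,\vec v_{i+1}^*)|$ ``with no loss'', is false. Take the projected block with Gram matrix $\begin{pmatrix}2&1\\1&-1\end{pmatrix}$, i.e. the form $(a,b,c)=(2,2,-1)$ with $\Delta=12$: it is reduced since $|\sqrt{12}-4|<2<\sqrt{12}$, yet $|a|=2>|c|=1$, and the next generalized-GSO entry is $\det/a=-3/2$, so $|\bb(\vec v_i^*,\vec v_i^*)|=2>3/2=|\bb(\vec v_{i+1}^*,\vec v_{i+1}^*)|$; even the weaker $\gamma_0^2\,|\bb(\vec v_i^*,\vec v_i^*)|\le|\bb(\vec v_{i+1}^*,\vec v_{i+1}^*)|$ fails for $\gamma_0$ close to $1$. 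In general, reducedness only gives $2|a|<\sqrt{\Delta}+b<2\sqrt{\Delta}$, hence merely $|\bb(\vec v_i^*,\vec v_i^*)|\le 4\,|\bb(\vec v_{i+1}^*,\vec v_{i+1}^*)|$; feeding a constant factor per indefinite block into your telescoping reintroduces a loss exponential in $\ell$ attached to the indefinite blocks, which is precisely what the theorem forbids (only the definite blocks counted by $N_i$ may be charged $D_{LLL}$).

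The paper obtains the indefinite-block inequality by a different route, which you would need to adopt to close the gap: it argues from the worthwhileness logic of Algorithms~\ref{alg:red-nosign} and~\ref{alg:red-sign}, not from the static notion of a reduced form. Namely, when the test on line~9 accepts the reduction of a previously non-reduced block, the retained first coefficient is bounded in absolute value by the square root of the absolute projected determinant, and when it declines, the incumbent $a$ was already so bounded up to a factor $\gamma_0^{-1}$; either way one gets $\gamma_0^2\,|\bb(\vec v_i^*,\vec v_i^*)|\le|\bb(\vec v_{i+1}^*,\vec v_{i+1}^*)|$, and the analogous $\gamma_0$-inequalities at vector/plane and plane/plane boundaries (which you only gesture at via ``enters through its determinant'') are stated and used separately before multiplying. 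So the repair is to re-derive every per-block inequality from the algorithm's acceptance thresholds at termination, after which your chaining and exponent accounting go through as in the paper.
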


Note that the seemingly weird choice of constant $D_{LLL}$ is made to
simplify the expression of the exponent of $\gamma_0$.\\

Before proving this main theorem, we first study some properties of
our algorithm. As a preliminary, let us recall that
the runtime of LLL is usually bounded by showing that an important quantity
called the {\bf potential} of the lattice decreases multiplicatively
throughout the algorithm. To analyze Algorithm~\ref{alg:indefLLL}, we
follow the same general strategy, however, this is somewhat trickier
since we only define our potential in Section~\ref{sec:admissible} for
admissible lattices.  In order to understand the running time, we
start by stating an important fact about Algorithm~\ref{alg:indefLLL}.

\begin{lemma}
 Whenever we enter the main loop of  Algorithm~\ref{alg:indefLLL}, the sublattice
 formed of the vectors $\vec{v}_1$, \dots, $\vec{v}_{k-1}$ is admissible.
\end{lemma}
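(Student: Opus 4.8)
The plan is to prove this by induction on the number of iterations of the main loop of Algorithm~\ref{alg:indefLLL}. The base case is the very first entry into the loop, where $k=1$ and the sublattice formed by $\vec{v}_1,\ldots,\vec{v}_{k-1}$ is empty, hence admissible by case~\eqref{case:empty} of the definition in Section~\ref{sec:admissible}. For the inductive step, I would assume the sublattice $(\vec{v}_1,\ldots,\vec{v}_{k-1})$ is admissible when we enter some iteration, and then trace through every branch of the loop body to verify that the value of $k$ at the next entry (call it $k'$) again yields an admissible sublattice $(\vec{v}_1,\ldots,\vec{v}_{k'-1})$.

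The key observation driving the argument is that the only ways $k$ changes are: it increases by $1$ (when nothing worthwhile happens, or a non-$G$-zero vector is incorporated without a swap), it jumps forward by $2$ when a fresh hyperbolic plane is appended, or it decreases after a worthwhile reduction step. When $k$ increases after incorporating a reported non-$G$-zero vector at position $k$, Lemma~\ref{lem:nonadh} tells us precisely that $(\vec{v}_1,\ldots,\vec{v}_{k-1},\vec{v}_k)$ spans an admissible Gram matrix of dimension $k$ (case~\eqref{case:normal}), so after $k\leftarrow k+1$ the prefix of length $k'-1=k$ is admissible. When a hyperbolic pair $(k,k+1)$ is appended (the ``pair reported'' branch), the two new vectors are isotropic with nonzero mutual scalar product and — crucially — fully orthogonal to $\vec{v}_1,\ldots,\vec{v}_{k-1}$ (this is exactly the ``second situation'' of Section~\ref{sec:Hyp-planes}, which the algorithm guarantees via size reduction and the $\alpha\neq0$ test of Section~\ref{sec:swaphyp}), so case~\eqref{case:hyp} applies and the prefix of length $k'-1=k+1$ is admissible after $k\leftarrow k+2$. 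When $k$ decreases after a worthwhile block reduction or a plane swap, the prefix $(\vec{v}_1,\ldots,\vec{v}_{k'-1})$ is a strictly shorter prefix of a sequence all of whose prefixes were admissible at earlier points; here I would either invoke a side-lemma that admissibility of a Gram matrix is inherited by its prefixes obtained by dropping a trailing normal vector or a trailing hyperbolic plane (not by truncating in the middle of a plane — and the algorithm's bookkeeping of $k$ precisely avoids landing there), together with the fact that size reduction and the unimodular operations of Sections~\ref{sec:binforms}, \ref{sec:swaphyp} do not change the span of any prefix and hence preserve admissibility of the earlier prefixes.

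I would also need to dispose of the small-$k$ special cases (the ``Let $k=2$ and \continue'' and ``Let $k=3$ and \continue'' branches): there the new prefix has length $1$ or $2$, and one checks directly that a single non-isotropic vector is admissible (case~\eqref{case:normal} over $\emptyset$) and that a freshly-created hyperbolic plane standing alone is admissible (case~\eqref{case:hyp} over $\emptyset$).

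The main obstacle I anticipate is the decreasing-$k$ case: one must argue carefully that after a worthwhile reduction or a hyperbolic-plane swap, the value $k$ is always reset so that position $k-1$ falls at the end of a ``complete'' admissible block structure — i.e. never in the interior of a hyperbolic plane — and that the local modifications genuinely preserve the admissible form of all earlier blocks rather than merely preserving their span. This is really a careful case analysis of how the ``$k$-change'' values returned by {\bf Apply plane reduction} and by Algorithms~\ref{alg:red-nosign}--\ref{alg:red-sign} interact with the hyperbolic-plane positions; the cleanest route is to first record the invariant (to be used throughout Section~\ref{sec:proof}) that at every loop entry the admissible prefix decomposes into normal vectors and disjoint consecutive hyperbolic planes with the planes orthogonal to everything before them, and then check this stronger invariant is maintained by each branch.
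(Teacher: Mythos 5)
Your plan is correct and follows essentially the same route as the paper's own proof: start from the empty (admissible) prefix, extend via the two inductive cases of admissibility when a non-adherent vector or a fresh fully-orthogonal hyperbolic pair is incorporated and $k$ moves forward, and settle the decreasing-$k$ case by exactly the observation you anticipate, namely that \textbf{Apply plane reduction} always resets $k$ to point before any hyperbolic plane it has processed, so the prefix never ends in the interior of a plane. The paper's argument is just a terser version of this same case analysis on how $k$ changes.
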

\begin{proof}
  This clear when $k=1$ since the empty lattice is admissible.

  After $k$ decreases, the only way to encounter a non-admissible
  lattice would be to have the new $\vec{v}_{k-1}$ as part of a
  hyperbolic plane. This cannot happen because when decreasing $k$,
  {\bf Apply plane reduction} always points before any hyperbolic
  plane it has been processing.

  When $k$ is planned to increase, it means the reduction we just
  performed did not perform any change in the previous vectors. And,
  we have two possibilities.
  If we found a new  vector $\vec{v}_{k}$, it necessary have
  $\vec{v}^{*}_{k}\neq 0$, this leads to an admissible lattice of
  dimension $k$. If we found a hyperbolic pair $(\vec{v}_{k},
  \vec{v}_{k+1})$, we get an admissible lattice of
  dimension $k+1$.
\end{proof}

\begin{lemma}\label{lem:fulldet}
  As an immediate corollary, when  Algorithm~\ref{alg:indefLLL},  the
  lattice basis $(\vec{v}_1$, \dots, $\vec{v}_{\ell})$ (excluding the
  vectors from $V_0$ at the end of the basis if any)  with Gram basis
  $G_\ell$ is admissible.
  Furthermore, its Gram determinant is:
  $$
  \det(G_\ell)=\det_{\neq 0}(G_L)=\prod_{\begin{array}{c}i=1\\ \bb(\vec{v}^{*}_i, \vec{v}^{*}_i)\neq
    0\end{array}}^{\ell}\bb(\vec{v}^{*}_i, \vec{v}^{*}_i)\, \cdot\, \prod_{\begin{array}{c}i=1\\
    (i,i+1)\ \mbox{hyperbolic}\end{array}}^{\ell}-\bb(\vec{v}^{*}_i, \vec{v}^{*}_{i+1})^2.
  $$
\end{lemma}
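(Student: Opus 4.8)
The plan is to read off the state of Algorithm~\ref{alg:indefLLL} at termination and feed it into the previous lemma together with the determinant bookkeeping of Section~\ref{sec:extendedGSO}. The algorithm stops in exactly one of two ways. Either the while-condition fails, i.e.\ $k>\dim(L_c)$; then the last change of $k$ came, exactly as in the proof of the previous lemma, from appending a non-$G$-zero vector (or a hyperbolic pair), so $k-1=\dim(L_c)=d$, the whole basis $(\vec{v}_1,\dots,\vec{v}_d)$ is admissible, $V_0=\{0\}$ and $\ell=d$. Or the explicit termination line is reached; since the inner for-loop only ever modifies vectors at positions $\geq k$, the prefix $(\vec{v}_1,\dots,\vec{v}_{k-1})$ is still admissible by the previous lemma, and the for-loop has certified both that every $\vec{v}_i$ with $i\geq k$ is a $G$-zero for $G=G_{k-1}$, and that $\bb(\vec{v}_i,\vec{v}_j)=0$ for all $k\leq i<j\leq d$.

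The technical heart is to upgrade ``$\vec{v}_i$ is a $G$-zero'' to ``$\vec{v}_i=\vec{v}^{*}_i$'', i.e.\ $\vec{v}_i$ is genuinely orthogonal to the whole prefix. When the last prefix vector $\vec{v}_{k-1}$ is non-isotropic, the $2\times2$ block formed by $\vec{v}_{k-1}$ and $\vec{v}_i$ after projection orthogonally to $\langle\vec{v}_1,\dots,\vec{v}_{k-2}\rangle$ has determinant $\bb(\vec{v}^{*}_{k-1},\vec{v}^{*}_{k-1})\,\bb(\vec{v}^{*}_i,\vec{v}^{*}_i)$, which vanishes because $\vec{v}_i$ adheres to $G$; hence this block is degenerate, the for-loop applies ordinary size reduction of $\vec{v}_i$ against $\vec{v}_{k-1}$ as well, the coefficient vector of $\vec{v}_i-\vec{v}^{*}_i$ lands in $[-\tfrac12,\tfrac12]^{k-1}$, and being integral (the $G$-zero condition) it is $0$. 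The case where $\vec{v}_{k-1}$ lies inside a hyperbolic plane is analogous, with the bad-index-adapted size reduction of the generalized GSO reducing the two coefficients along that plane modulo $1$. Consequently $\vec{v}_k,\dots,\vec{v}_d$ are pairwise orthogonal, each isotropic, and orthogonal to $\langle\vec{v}_1,\dots,\vec{v}_{k-1}\rangle$, so they span a totally isotropic subspace $W$ with $V_L=\langle\vec{v}_1,\dots,\vec{v}_{k-1}\rangle\perp W$; since $G_{k-1}$ is invertible (admissible matrices are) and $\bb|_W\equiv 0$, the kernel $V_0$ of $\bb_L$ is exactly $W$. Therefore the non-degenerate part of the basis is $(\vec{v}_1,\dots,\vec{v}_{k-1})$, $\ell=k-1$, $G_\ell=G_{k-1}$ is admissible, and in the basis $(\vec{v}_1,\dots,\vec{v}_d)$ the Gram matrix of $L_c$ is block diagonal with diagonal blocks $G_\ell$ and $0$.

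It remains to collect the two determinant claims. Every operation of the algorithm is unimodular, so $L_c$ is $L$ in another basis and we have just displayed the transformed Gram in the shape $\left(\begin{smallmatrix}G_\ell&0\\0&0\end{smallmatrix}\right)$ with $G_\ell$ nondegenerate; by Theorem~\ref{thm:zerofactor} and the invariance recorded in the remark following it, $G_\ell$ is a Gram matrix of the nondegenerate lattice induced by $G_L$, so $\det(G_\ell)=\det_{\neq 0}(G_L)$ by definition. The stated product is then precisely the generalized-GSO determinant formula of Section~\ref{sec:extendedGSO} applied to the admissible $G_\ell$, once one notes that its set of good indices equals $\{i:\bb(\vec{v}^{*}_i,\vec{v}^{*}_i)\neq 0\}$ and its set of bad indices equals $\{i:(i,i+1)\text{ is a hyperbolic plane}\}$. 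I expect the middle paragraph to be the real obstacle: it is the only step that genuinely uses how size reduction versus clean-up is dispatched inside the for-loop (and that dispatch must also be checked in the hyperbolic-plane situation), and it is what forces the transformed lattice to be truly block diagonal rather than merely block triangular.
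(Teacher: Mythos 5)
Your proof is correct and takes essentially the route the paper intends: the loop-invariant admissibility lemma applied at termination, the observation that the leftover vectors are size-reduced $G_\ell$-zeroes and hence genuinely orthogonal to the prefix (which yields the block decomposition of Theorem~\ref{thm:zerofactor}, the identification $\ell=k-1$, and $\det(G_\ell)=\det_{\neq 0}(G_L)$), and finally the generalized-GSO determinant formula with good indices the non-isotropic positions and bad indices the hyperbolic planes. The paper declares all of this ``immediate''; your middle paragraph simply supplies the size-reduction/clean-up dispatch details that the paper leaves implicit, and it is sound (up to the harmless point that size reduction bounds the GSO coefficients $\mu_{ij}$, from which the basis coefficients $\theta_j$ vanish by a downward triangular induction rather than being directly in $[-\tfrac12,\tfrac12]$).
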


In addition, we easy see that when the algorithm terminates, all the
vectors that are left unprocessed are necessarily
$G_\ell$-zeroes. Furthermore, since no hyperbolic planes are left they
are mutually orthogonal. This proves the last assertion of Theorem~\ref{th:IndefLLL}.

\subsubsection{Adherent vectors}
The first thing we we need to study is the behavior of the algorithm
after its finds and  reports a vector $\vec{v}_k$ that adheres 
to the sub-lattice $(\vec{v}_1,\cdots,\vec{v}_{k-1})$ with partial Gram
matrix $G$. Note that since the vector $\vec{v}_k$ is reported as
found, it is not a $G$-zero.\\

If $\vec{v}_k$ follows a hyperbolic plane, the
reduction step necessarily moves it before that plane.
If we had either $\bb(\vec{v}_k, \vec{v}_{k-1})\neq 0$ or
$\bb(\vec{v}_k, \vec{v}_{k-2})\neq 0$, after the move, the updated
Gram vector is no-longer isotropic and the hyperbolic plane is
destroyed as seen in Section~\ref{sec:swaphyp}. In fact,  in the case
$\bb(\vec{v}_k, \vec{v}_{k-2})\neq 0$, the local
three-dimensional projected Gram is, up to scaling and using the
notation of Section~\ref{sec:swaphyp}:
$$
G_3=\begin{pmatrix}
  0 &\alpha & \beta\\
  \alpha & 0 & 1\\
  \beta & 1 & 0
\end{pmatrix}.
$$
So we are back to a similar situation with a smaller hyperbolic plane
and an adherent vector again.

We now look at the case where $\vec{v}_k$ follows a vector with a
non-isotropic $\vec{v}^{*}_{k-1}$. Since $\vec{v}^{*}_k$ is isotropic,
we can compute  the two-dimensional Gram matrix of the projected
lattice coming from $\vec{v}_{k-1}$ and $\vec{v}_{k}$
by the following formula:
$$
G_2=\begin{pmatrix}
  \bb(\vec{v}^{*}_{k-1}, \vec{v}^{*}_{k-1}) & \bb(\vec{v}^{*}_{k-1},
  \vec{v}_{k})\\
  \bb(\vec{v}^{*}_{k-1},\vec{v}_{k}) &\frac{\bb(\vec{v}^{*}_{k-1},
  \vec{v}_{k})^2}{\bb(\vec{v}^{*}_{k-1}, \vec{v}^{*}_{k-1})}
\end{pmatrix}=
\begin{pmatrix}
  N & S\\
  S & S^2/N
\end{pmatrix},
$$
if we let $N=\bb(\vec{v}^{*}_{k-1}, \vec{v}^{*}_{k-1})$ and
$S=\bb(\vec{v}^{*}_{k-1},\vec{v}_{k}).$

Since $G_2$ has determinant $0$, its computes GCD of the two projected
vectors and thus outputs an updated pair of projected vectors
$(\vec{0}, \vec{v}^{*}_{k-1}/h$ for some integer $h$. When the
transformation is lifted, we get two new vectors $(\vec{w}_{k-1},
\vec{w}_{k})$. It is easy to see that $\vec{w}_{k-1}$ adheres to the
sublattice $(\vec{v}_1,\cdots,\vec{v}_{k-2})$ and  that $\vec{w}_{k}$
does not. If $\vec{w}_{k-1}$ is a zero relatively to the truncated
sublattice, it is moved away during the next iteration of the
algorithm. When this happens, $\vec{w}_{k}$ is reported as found and 
 $(\vec{v}_1,\cdots,\vec{v}_{k-2},\vec{w}_{k})$ is a new admissible
 lattice with Gram-determinant equal to the Gram-determinant of the initial
 $(\vec{v}_1,\cdots,\vec{v}_{k-2},\vec{v}_{k-1})$ divided by $h^2$.

 This implies the following lemma.
 \begin{lemma}
   When Algorithm~\ref{alg:indefLLL} encounters a vector $\vec{v}_k$
   that adheres to the admissible sub-lattice $(\vec{v}_1,\cdots,\vec{v}_{k-1})$,
   it creates (in polynomial time) a new admissible sub-lattice
   $(\vec{v}'_1,\cdots,\vec{v}'_{k-1})$. Furthermore, the determinant
   and potential of the new sub-lattice are 
   smaller than the initial values by a factor $4$ at least.
\end{lemma}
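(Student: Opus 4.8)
The lemma combines two facts established in the surrounding discussion: termination in polynomial time, and a multiplicative drop by at least $4$ in both the determinant and the potential. The plan is to split the argument into the two cases already singled out in the text: the case where $\vec{v}_k$ follows a non-isotropic $\vec{v}^{*}_{k-1}$, and the case where $\vec{v}_k$ follows a hyperbolic plane. In the first case, one observes that the projected $2\times 2$ Gram matrix $G_2=\begin{pmatrix} N & S\\ S & S^2/N\end{pmatrix}$ has determinant $0$, so the reduction routine run on the degenerate form $(N,2S,S^2/N)$ falls into the $\Delta=0$ branch (the ``missing cases'' of Section~\ref{sec:missing}); it computes the GCD of the two projected vectors and outputs $(\vec{0},\vec{v}^{*}_{k-1}/h)$ for a positive integer $h$. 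Lifting, one gets new vectors $(\vec{w}_{k-1},\vec{w}_{k})$ with $\vec{w}_{k-1}$ adhering to $(\vec{v}_1,\cdots,\vec{v}_{k-2})$, hence moved away on the next pass, while $\vec{w}_{k}$ is reported as found; the resulting admissible sublattice $(\vec{v}_1,\cdots,\vec{v}_{k-2},\vec{w}_{k})$ has Gram-determinant equal to that of $(\vec{v}_1,\cdots,\vec{v}_{k-2},\vec{v}_{k-1})$ divided by $h^2\geq 1$. The key quantitative point is that $h\geq 2$: if $h=1$ the vector $\vec{v}_{k}$ would already have been size-reduced to a $G$-zero (or $\vec{v}^{*}_k$ would be non-isotropic), contradicting the hypothesis that $\vec{v}_k$ was reported as \emph{found} after size reduction. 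Therefore the determinant drops by a factor $h^2\geq 4$, and since the potential $\pot{\cdot}$ is, by its inductive definition in Section~\ref{sec:admissible}, built multiplicatively from such determinants (here the factor $|\det H|$ for the topmost block of a normal step), the potential drops by the same factor $h^2\geq 4$ or more.

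For the hyperbolic-plane case, the plan is to reduce it to the previous one. If $\vec{v}_k$ follows a hyperbolic plane $(k-2,k-1)$, the reduction step moves $\vec{v}_k$ before that plane; if $\bb(\vec{v}_k,\vec{v}_{k-1})\neq 0$ the plane is immediately destroyed (Section~\ref{sec:swaphyp}), and if instead $\bb(\vec{v}_k,\vec{v}_{k-2})\neq 0$ the local $3\times 3$ projected Gram becomes, up to scaling, $G_3=\begin{pmatrix} 0 &\alpha & \beta\\ \alpha & 0 & 1\\ \beta & 1 & 0\end{pmatrix}$, which again presents a smaller hyperbolic plane together with an adherent vector, so one recurses. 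After finitely many such steps one returns to the non-isotropic predecessor case handled above, and the determinant/potential accounting is inherited; here one must check that replacing a hyperbolic block contributes a factor of at least $4$ as well, which follows from the $|\alpha|^3\det(G)^2$ shape of $\pot{\cdot}$ in case~\eqref{case:hyp} together with the integrality of the change of basis.

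The remaining obligation is the polynomial-time claim. Each of the elementary operations invoked — generalized GSO against an admissible $G$, size reduction / clean-up (Algorithm~\ref{alg:cleanup}), the degenerate binary-form reduction computing a GCD, and the cyclic shifts — runs in polynomial time in the bit-size of the current basis, and only a bounded number of them (at most $O(k)$, plus the $O(1)$-bounded recursion through nested hyperbolic planes) is performed before a new admissible sublattice is produced; the bit-sizes stay polynomially bounded because all intermediate quantities are controlled by size reduction. The one subtlety worth isolating is the number of recursive descents through hyperbolic planes: since each descent strictly shortens the hyperbolic plane by incorporating one more previous vector, the depth is at most the number of vectors before position $k$, so it is bounded by $k\leq d$.

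I expect the main obstacle to be the verification that $h\geq 2$ in the non-isotropic predecessor case — i.e.\ that an adherent, non-$G$-zero, size-reduced vector $\vec{v}_k$ really forces a nontrivial GCD factor — since this is precisely where the hypothesis ``reported as found'' is used and where one must be careful about what size reduction does to the coefficient vector $(\theta_i)$ from Equation~\eqref{eq:thetavector}; the determinant bookkeeping and the complexity bound are then routine.
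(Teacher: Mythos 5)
Your proposal follows essentially the same route as the paper: the same split into the non-isotropic-predecessor case and the hyperbolic-plane case, the same degenerate $2\times 2$ projected Gram whose reduction amounts to a GCD and divides the Gram determinant by $h^2$, and the same recursion through nested hyperbolic planes. The paper's proof is precisely this discussion, given in the paragraphs preceding the lemma, so in structure you have reproduced it.

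One caveat concerns the single step you add beyond the paper, namely the verification that $h\geq 2$ (which you correctly single out as the crux). Here $h$ is the denominator of $\theta_{k-1}=S/N$, and since the clean-up of Algorithm~\ref{alg:cleanup} falls into its $\Delta\leq 0$ branch and performs ordinary size reduction, one only gets $|S/N|\leq 1/2$; hence $h=1$ is possible exactly when $S=0$. In that situation only the single coefficient $\theta_{k-1}$ of Equation~\eqref{eq:thetavector} is forced to be integral, not the whole coefficient vector, so $\vec{v}_k$ need not be a $G$-zero and $\vec{v}^{*}_k$ is still isotropic; your dichotomy ``if $h=1$ then $\vec{v}_k$ would already have been size-reduced to a $G$-zero (or $\vec{v}^{*}_k$ would be non-isotropic)'' therefore does not hold as stated. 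In this sub-case the reduction of the block $(N,0,0)$ is a mere swap: the determinant of the $(k-1)$-dimensional prefix does not drop at that step, the adherent vector is pushed back one position, and the factor-$4$ decrease can only come from the recursion at a smaller index, where a non-integral $\theta_j$ eventually yields $h\geq 2$. The paper glosses over this point as well (it only asserts the division by $h^2$), so your write-up is no less rigorous than the original, but the explicit justification of $h\geq 2$ that you propose would need to be repaired along these lines.
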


\subsubsection{Beyond adherent vectors}
As we see above, adherent vectors are only a temporary inconvenience
that can be dealt with a polynomial time overhead. We now want to
bound the number of backward steps that can occur throughout the
algorithm.

Let us denote by $T_K$ the minimum of the first time where a loop with
current position $k=K$ discovers a non-adherent vector and the first
time where a  loop with current position $k=K$ discovers a hyperbolic
plane. If $k$ becomes larger than $K$ before on of these events
happen, we let $T_K=\infty$. This happens when the loop at position
$k=K$ discover a hyperbolic plane first, thus setting $T_{K+1}$. At
time $T_K$, we have an admissible sublattice of dimension $K$. For
notational convenience, we denote by $T^{(f)}_K$ the minimum of
$T_{K+1}$, $T_{K+2}$ and the last loop of the algorithm. It
corresponds to first moment in time where we move beyond dimension
$K$, either constructing a larger admissible lattice or terminating
the algorithm.

\begin{lemma}
  At any point in time in the interval $[T_K,T^{(f)}_K[$ except during
  the integration phase of an adherent vector discovered at position
  $k=K+1$, the sub-lattice $(\vec{v}_1,\cdots,\vec{v}_{K})$ is
  admissible. As a consequence, its potential remains defined
  throughout that time interval.
\end{lemma}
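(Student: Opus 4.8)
The plan is to prove the lemma by induction on the events that modify $k$ inside the interval $[T_K, T^{(f)}_K[$, using the characterization of the previous lemmas. First I would recall what $T_K$ gives us: at time $T_K$ the loop at position $k=K$ has just reported either a non-adherent vector or a hyperbolic plane, and in both cases the earlier ``admissibility of the prefix'' lemma tells us that $(\vec{v}_1,\dots,\vec{v}_{K-1})$ is admissible and, after the cyclic shift and the reduction step (which for a non-adherent vector adds a non-isotropic GSO vector, and for a hyperbolic plane adds an orthogonal anti-diagonal block), the enlarged prefix $(\vec{v}_1,\dots,\vec{v}_{K})$ — respectively $(\vec{v}_1,\dots,\vec{v}_{K+1})$ in the hyperbolic case — is admissible as well. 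So the base of the induction holds: immediately after $T_K$ the prefix of length $K$ is admissible.

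Next I would enumerate the ways the algorithm can subsequently touch the first $K$ coordinates while $k$ stays in $\{1,\dots,K\}$ (i.e. before $T^{(f)}_K$, so before we push beyond dimension $K$). These are exactly: (a) a backward step triggered by {\bf Apply vector reduction} or {\bf Apply plane reduction}, which replaces a $2\times 2$ (or $2\times 2$ vs.\ plane) block by an equivalent one via a unimodular transform; (b) the discovery and integration of an adherent vector at some position $k\le K$; and (c) the special handling when $k$ reaches $1$. For (a), the key observation — already established in Section~\ref{sec:admissible} and Section~\ref{sec:swaphyp} — is that these local moves either keep the block definite/degenerate-free or keep it a hyperbolic plane, and crucially {\bf Apply plane reduction} always repositions $k$ so that $k-1$ points strictly before any hyperbolic plane it processed; hence the prefix $(\vec{v}_1,\dots,\vec{v}_{k-1})$ never ends in the middle of a hyperbolic block, so admissibility (closure under cases~\eqref{case:normal} and~\eqref{case:hyp}) is preserved. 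For (b), the adherent-vector lemma proved just above says the integration produces, in polynomial time, a new admissible prefix of the same length with strictly smaller determinant and potential — so admissibility survives this phase too; the only moment it may fail is exactly the transient ``integration phase'' that the statement explicitly excludes.

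The induction step then glues these together: starting from an admissible prefix of length $K$ right after $T_K$, every event of type (a), (b), or (c) that occurs before $T^{(f)}_K$ either preserves admissibility of the length-$(k-1)$ prefix for the new value of $k$, or falls inside the excluded integration window; and since $k$ never exceeds $K$ in this interval (that is the definition of $T^{(f)}_K$), the prefix $(\vec{v}_1,\dots,\vec{v}_{K})$ — truncating the admissible working prefix or completing it with the vectors already placed in positions $k,\dots,K$, all of which carry non-isotropic or hyperbolic GSO data by the argument of the adherent-vector analysis — is admissible at every such point. Admissibility of a Gram matrix makes the local potential $\pot{\cdot}$ well-defined by its inductive definition in Section~\ref{sec:admissible}, which yields the stated consequence.

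The main obstacle I expect is bookkeeping the exact shape of the prefix at intermediate moments: after a cyclic shift the ``current'' admissible object has length $k-1$, but the claim is about the fixed length $K$, so one has to argue that the vectors sitting in positions $k,\dots,K$ still form, together with the length-$(k-1)$ admissible prefix, a valid admissible matrix — this needs the invariant that those trailing vectors were each reported as non-adherent (non-isotropic GSO) or as members of an orthogonal hyperbolic pair when they were last placed, and that nothing since has disturbed their orthogonality pattern. Making that invariant precise, and checking it is maintained across the type-(a) local moves (especially the three-dimensional ``plane meets adherent vector'' configuration with Gram $G_3$ above, which recurses to a smaller hyperbolic plane plus adherent vector), is the delicate part; everything else is a direct appeal to the admissibility-closure rules and the two preceding lemmas.
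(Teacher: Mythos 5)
Your overall plan matches the paper's: start from the admissibility established at time $T_K$, observe that the adherent‑vector lemma handles (and the statement excludes) the integration phases, and argue that the remaining operations cannot destroy admissibility of the length‑$K$ prefix. However, the one step that actually carries the lemma is precisely the one you defer: you write that the trailing vectors in positions $k,\dots,K$ ``still form, together with the length-$(k-1)$ admissible prefix, a valid admissible matrix'' provided ``nothing since has disturbed their orthogonality pattern,'' and then declare making this precise to be ``the delicate part'' without doing it. As it stands, your case (a) only re-proves admissibility of the \emph{current} prefix of length $k-1$ (which is the previous lemma), not of the fixed prefix of length $K$, so the proposal has a genuine gap at its core.

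The paper closes this gap with a single observation that you never state: outside the integration phase, every operation performed while $k\leq K+1$ (block reduction, plane reduction, size reduction, clean-up) is a unimodular change of basis of the sublattice spanned by $\vec{v}_1,\dots,\vec{v}_{k-1}$ (possibly together with additions of these vectors to later ones), so the span of the first $\ell-1$ vectors is unchanged for every $\ell$ with $k\leq \ell\leq K$. Since the generalized GSO vector $\vec{v}^{*}_\ell$ is the projection of $\vec{v}_\ell$ orthogonal to that span (well defined because the prefix Gram matrix is invertible by admissibility), each $\vec{v}^{*}_\ell$ for $k\leq\ell\leq K$ is \emph{literally unchanged}. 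Hence the admissibility certificate for positions $k,\dots,K$ — nonzero $\bb(\vec{v}^{*}_\ell,\vec{v}^{*}_\ell)$ or an orthogonal hyperbolic pair — persists verbatim, and each previously placed vector or hyperbolic plane is simply found again when the loop returns to its position; this is also why your worry about cyclic shifts reshuffling positions $k,\dots,K$ does not materialize, as the first non-adherent vector reported is the one already sitting at position $k$. If you add this invariance of the $\vec{v}^{*}_\ell$ (and note that it fails only temporarily during an integration phase, which the adherent-vector lemma shows ends in an admissible prefix of smaller potential), your argument becomes complete and coincides with the paper's.
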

\begin{proof}
  Except during the integration phase, all operations simply change
  the basis of the current sublattice of dimension $k-1$ with $k\leq
  K+1$. Furthermore, for any vector $\vec{v}_\ell$ with $k\leq \ell
  \leq K$, its generalized Gram-Schmidt $\vec{v}^{*}_\ell$ remains
  identical. So any previous vector or previous hyperbolic plane is
  found again when its position is reached.

  During an integration phase, things are temporarily perturbed. But
  as we saw, they fall back into place when the integration finishes,
  with a decreased potential.
\end{proof}

We define the {\bf current potential} of the lattice reduction process
as the potential of the largest admissible sub-lattice currently
discovered. During the algorithm, the number of times we change
the local potential that the current potential is pointing to is bounded by the
dimension. Since, each of these potentials is polynomially bounded in
the dimension and the size of the entries of the Gram matrix of the
input lattice, any of them can only be divided by $\gamma_0$ a
polynomial number of times before reaching its lower bound of $1$.

Thus, to prove that the overall algorithm performs a polynomial time
of arithmetic operations, we
simply need to show that at most polynomially many operations separate
two consecutive multiplication of a potential by $\gamma_0$. In the
next two subsections, we check that for the reduction steps of
non-adherent vectors and of hyperbolic planes. As in the classical runtime
analysis of LLL, we only bound the number of backward steps, since
a bound on the number of forward steps is easily derived from that.

From the bound on the number of arithmetic operations, we can derive
the polynomial time complexity by remarking that, as in LLL, if the
absolute value of largest entry in the input matrix is $B$, then all
numbers throughout the algorithm can be bounded by $(dB)^{d^2}.$

\subsubsection{Non adherent vectors} 
When a non-adherent vector is found, the subsequent reduction step can
either decrease $k$ or increase it.  Most of the time when it decreases the
Gram-determinant and potential of the sublattice of dimension $k-1$ is
reduced by a factor at least $\gamma_0$. There is the exception of
line 10 in Algorithm~\ref{alg:red-nosign}
and~\ref{alg:red-sign}. However, this special case never increases the
potential and we can check that it cannot occur often enough to more
than double the overall runtime.

\subsubsection{Hyperbolic planes}
For hyperbolic planes, if we set $\gamma_h=\gamma_0$, then the
potential decreases by $\gamma_0$ at least whenever a backward step
occurs.

If we set $\gamma_h=1$, the situation is a bit different. We need to
look more precisely at each type of exchange in order to bound them.
We have three cases.
\begin{itemize}
\item The exchange occurs between two hyperbolic plane. 
\item The exchange occurs between a hyperbolic plane and a vector with
  $\alpha=\beta=0$ and $|\gamma|<1$ (following the notations of
  Section~\ref{sec:swaphyp}. 
\item The exchange occurs because $\alpha\neq 0$ or $\beta\neq 0$.
\end{itemize}
In the first two cases, we are simply reordering small dimensional
lattices that are mutually orthogonal. The hyperbolic plane initially
at position $(k,k+1)$ can be moved back, possibly all the way to
$(1,2)$. Yet, it is a simple reordering that needs less than $k$
steps. As a consequence, even it this does not decrease the potential,
it only implies a multiplicative polynomial overhead.

In the last case, the local Gram matrix is first transformed into:
$$
\begin{pmatrix}
  \gamma &\alpha & \beta\\
  \alpha & 0 & 1\\
  \beta & 1 & 0
\end{pmatrix}.
$$
The local potential of this matrix is $\gamma^2\,\alpha^2$, with
$|\gamma|<1$ and $|\alpha|\leq 1/2$ (by size reduction). This is
obviously more than a factor $\gamma_0$ better than the potential
before the transformation (it was equal to $|\gamma|$).

As a consequence, setting $\gamma_h=1$ makes the analysis more complex
but still guarantees a polynomial runtime.

\subsubsection{Quality of approximation}
For any definite block consisting of vector at position $i$ and $i+1$,
we have the usual bound: 
$$
|\bb(\vec{v}_i^{*}, \vec{v}_i^{*})|\leq (\gamma_0-(1/4))^{-1}\, |\bb(\vec{v}_{i+1}^{*}, \vec{v}_{i+1}^{*})|.
$$
that we rewrite as
$$
\gamma_0^2\, |\bb(\vec{v}_i^{*}, \vec{v}_i^{*})|\leq D_{LLL}\, |\bb(\vec{v}_{i+1}^{*}, \vec{v}_{i+1}^{*})|.
$$

For indefinite blocks, we have two options depending on the test on
line~9 of Algorithms~\ref{alg:red-nosign} and~\ref{alg:red-sign}. If
this test marks the first reduced form we encounter as worthwhile,
then the corresponding $a$ is bounded in absolute value by the
determinant. If not, then the previous $a$ was already bounded in
absolute value by the determinant over $\gamma_0.$
As a consequence, we have:
$$
\gamma_0^2\, |\bb(\vec{v}_i^{*}, \vec{v}_i^{*})|\leq |\bb(\vec{v}_{i+1}^{*}, \vec{v}_{i+1}^{*})|.
$$
For a non-isotropic vector followed by a hyperbolic plane, we have:
$$
\gamma_0\, |\bb(\vec{v}_i^{*}, \vec{v}_i^{*})|\leq |\bb(\vec{v}_{i+1}^{*}, \vec{v}_{i+2}^{*})|.
$$
Similarly, with a hyperbolic plane, we have:
$$
\gamma_0\, |\bb(\vec{v}_i^{*}, \vec{v}_{i+1}^{*})|\leq |\bb(\vec{v}_{i+2}^{*}, \vec{v}_{i+2}^{*})|.
$$
And for two hyperbolic planes:
$$
\gamma_0\, |\bb(\vec{v}_i^{*}, \vec{v}_{i+1}^{*})|\leq |\bb(\vec{v}_{i+2}^{*}, \vec{v}_{i+3}^{*})|.
$$
Composing and multiplying the inequalities as before, with
multiplicity for the hyperbolic planes, we find:
$$
\gamma_0^{d(d-1)}|\bb(\vec{v}_1, \vec{v}_s)|^{d}\leq D_{LLL}^{\sum_{i=1}^{d}N_i}\,|\det(G_L)|.
$$
Where $s=1$ if the basis starts with a non-isotropic vector and $s=2$ otherwise.
This concludes the proof of Theorem~\ref{th:IndefLLL}.
\\

\begin{remark}
  The presence of hyperbolic planes improves Theorem~\ref{th:IndefLLL}
  in two ways. First, it reduces the exponent of $\gamma_0$ if we
  more carefully analyze it. Second, hyperbolic planes may also reduce
  the value of $\sum_{i=1}^{d}N_i$ below the lower bound of
  $\sigma(\sigma-1)/2$ from Section~\ref{sec:prelimSimon}.
  For example, consider the following matrix of dimension 3:
  $$
  \begin{pmatrix}
    1 & 0 & 0 \\
    0& 0 & 1 \\
    0 & 1& 0 \\
  \end{pmatrix},
  $$
  its signature is $1$ but $\sum_{i=1}^{d}N_i=0$. If we assemble
  diagonally $n$ copies of it, we obtain a matrix a dimension $3n$
  with signature $n$, that satisfies $\sum_{i=1}^{d}N_i=0$.
\end{remark}

\subsection{Heuristic expectation for the Algorithm}\label{sec:heuri}
Looking at the statement of Theorem~\ref{th:IndefLLL}, it is quite
clear that it misses a good upper bound on
$\sum_{i=1}^{d}N_i$. Unfortunately, it is possible to assemble a
counter-example family of reduced lattices that match the upper bound from
Section~\ref{sec:prelimSimon}. Furthermore these lattices are already
reduced for Algorithm~\ref{alg:indefLLL} and match the bound of
Theorem~\ref{th:IndefLLL} (with $\epsilon$ set to $0$).

To construct this family, we assemble definite blocks with Gram matrix
equal, up to scaling, to:
$$
\begin{pmatrix}
  2 & 1\\
  1 & 2
\end{pmatrix}.
$$
Each such block, as Gram-Schmidt squared-norms equal to $2$ and $3/2$,
thus matching the $4/3$ gap. These blocks can be assembled to form large LLL-reduced Gram matrices
of the form:
$$
G^{d}_\lambda=\begin{pmatrix}
  2 \lambda & \lambda & 0 &0 & \cdots & 0\\
  \lambda & 2\lambda & (3/4)\,\lambda &0 & \cdots & 0\\
  0& (3/4)\,\lambda & (3/2)\,\lambda  & (3/4)^2\,\lambda  &\cdots & 0\\
  0 & 0 &(3/4)^2\,\lambda & 2\times (3/4)^2\,\lambda &\cdots & 0 \\
  \vdots & \vdots&\vdots&\vdots &\ddots &\vdots \\
  0 &0 &\cdots&0&0& 2\times (3/4)^{d-2}\,\lambda& (3/4)^{d-1}\,\lambda\\
  0 &0 &\cdots&0&0& (3/4)^{d-1}\,\lambda&2\times (3/4)^{d-1}\,\lambda
\end{pmatrix}.
$$
Putting together two such matrices, we create the indefinite Gram
matrix:
$$H=\begin{pmatrix}
  G^{d}_\lambda &0\\
  0&-G^{d}_\mu
  \end{pmatrix},
$$
with $\lambda=16^{d-1}$ and $\mu=12^{d-1}$.
This ensures that $H$ is integral and the single indefinite block in
the middle is a scaled-up version of:
$$\begin{pmatrix}
  1&0\\
  0&-1
\end{pmatrix}.
$$
Note that the signature of $H$ is $\sigma=0$.
\\

As a consequence of this example, we cannot expect a worst case
improvement for the quality of the result of Algorithm~\ref{alg:indefLLL} . However,
for the regular LLL, we have a gap between the best proven bound and
the observed results. We expect a similar behavior here and we
conjecture the following heuristic.

\begin{heuristic}\label{th:heurqual}
  There exist a parameter $\epsilon$ and universal constants $c$ and
  $C$ such that:
  For a sufficiently random input bases of an indefinite lattice $L$,
  with dimension $d$ and signature $\sigma$,
  Algorithm~\ref{alg:indefLLL} outputs a basis that satisfies either:
  \begin{align*}
   |\bb(\vec{v}_1,\vec{v}_1)|^{d} &\leq C\cdot 
                                    (4/3)^{c\cdot\sigma(\sigma-1)}\,|\det(G_L)|
                                    \quad \mbox{or}\\
    |\bb(\vec{v}_1,\vec{v}_2)|^{d} &\leq C\cdot 
                                     (4/3)^{c\cdot\sigma(\sigma-1)}\,|\det(G_L)|,
  \end{align*}
  depending on whether the first vector is isotropic.
    
\end{heuristic}

One approach to back up the heuristic is to argue that the sign strategy
almost guarantees perfect sign alternance. Note that it would be the
case under the assumption that each indefinite block can be reduced
into a form that satisfies the sign alternance
criteria. Unfortunately, there exist indefinite blocks that require a
degradation of the potential to obtain sign alternance. For example,
$$\begin{pmatrix}
  1&3\\
  3&-6
\end{pmatrix}
$$
belongs to a reduction cycle of length $2$, which simply exchanges the
$1$ and $-6$. As a consequence, this block does not accept a reduced
form with a negative first vector small the absolute value of the
determinant (equal to $15$).\\

However, it is reasonable to argue that such example should be
rare. Indeed, for a quadratic real field $\QQ[\sqrt{\Delta}]$, the
Cohen-Lenstra heuristic~\cite[Conjecture 5.10.2]{Cohen2013} tell us
that the class number should be small. Thus, the number of reduction
cycles should also be small. There are many reduced forms when
$\Delta$ is large and there is a global symmetry between reduced form
starting with positive and negative values of $a$. Thus, it is natural
to expect that each reduction cycle should be balanced with respect to
sign. As a consequence, for large values of $\Delta$, there is a good
chance that each desired sign constraint can be
satisfied. Furthermore, if most of them are then $\sum_{i=2}^{n}N_i$
will not deviate much from its lower bound $\sigma(\sigma-1)/2.$\\

Furthermore, it is well known that in any cycle of reduced indefinite forms of
discriminant $\Delta$, there exits a form $(a,b,c)$ with
$|a|\leq \sqrt{\Delta/5}$, see Exercice~17 in~\cite[Chapter
5]{Cohen2013}. As a consequence, in indefinite blocks, it is
reasonable to expect that $|a|$ bounded away from $|c|$. This is also
heuristic, since Exercice~17 does not tell us where such a good form
lies in the reduction cycle and thus does not guarantee us progress in
the limited number of reduction steps the algorithm performs for each
block. However, if even a constant fraction of indefinite blocks correspond
to a form $(a,b,c)$ where $|a|$ is noticeably smaller than
$\sqrt{\Delta/4}$, the overall quality we get outperforms the result of
Theorem~\ref{th:IndefLLL}. \\

Another way to justify Heuristic~\ref{th:heurqual} is to perform
extensive experiments. We have not done so yet, however, the early
experiments we report in the next section tend to indicate that
Heuristic~\ref{th:heurqual}  is in fact too weak. Indeed, in our
examples, $|\bb(\vec{v}_1, \vec{v}_1)|$ is much smaller that the $d$-th root of
$|\det(G_L)|$ itself.

\subsection{Implementation and experiments}
In order to experiment with our algorithm, we have written a first
version of it in interpreted Magma. This implementation is way less
sophisticated than state of the art implementations of LLL. Indeed,
with LLL, many techniques have been used to optimize performance. One
of the most important is probably the use of floating point arithmetic
during the GSO computations. Concerning GSO, advanced implementation
also update Gram-Schmidt orthogonalized vector when a lattice
modification occurs. This is very important for concrete runtimes.

In our current version, as given in Appendix~\ref{sec:magmacode}, we
recompute generalized GSO vectors after each 
lattice modification rather than updating them and all computations
are done using rationals.
In terms of speed performance, this is sub-optimal, however, these
choices made implementation much easier and more timely. Thanks to it,
we can report preliminary results and give comparison to the previous
state of the art. Using Magma is also very useful for the comparison
since this software already contains an implementation of LLL for
indefinite lattices. According to Magma's documentation, this
implementation is a variant of Simon's algorithm with some
undocumented changes to improve the treatment of isotropic
vectors. It is possible to either activate or deactivate this special
treatment by using a parameter in the function call.

\subsubsection{Random Gram matrices} 
Our first test is done by picking a $10\times 10$ symmetric matrix
with entries in $[-100,100]$. For example, the matrix:
$$
\begin{pmatrix}
  44& 25& 45& 93& -71& -49& -69& -99& 3& -3\\
  25& -78& 47& 54& 99& 87& -52& 49& -66& -34\\
  45& 47& 78& 16& 50& 5& -29& 35& -37& 27\\
  93& 54& 16& -59& -34& 34& -81& -68& -43& 76\\
  -71& 99& 50& -34& 68& 80& 29& -77& 22& -90\\
  -49& 87& 5& 34& 80& 30& 53& 90& 20& -80\\
  -69& -52& -29& -81& 29& 53& 28& -61& 75& 80\\
  -99& 49& 35& -68& -77& 90& -61& -49& -66& 9\\
  3& -66& -37& -43& 22& 20& 75& -66& -56& 78\\
-3& -34& 27& 76& -90& -80& 80& 9& 78& 74 
\end{pmatrix}.
$$

On this lattice, Simon's standard implementation in Magma, with or
without the special treatment returns:
$$
\begin{pmatrix}
  -9& 1& 1& -2& 0& 4& 2& 4& -3& -2\\
  1& -13& 2& 0& 1& 5& -2& 3& 2& 6\\
  1& 2& 43& -11& -19& 16& 9& 12& -15& 11\\
  -2& 0& -11& 57& 25& 2& 9& 15& -3& 7\\
  0& 1& -19&  25& 60& -18& 0& 14& -17& 20\\
  4& 5& 16& 2& -18& 245& 106& -100& -44& -12\\
  2& -2&  9& 9& 0& 106& -290& -84& 142& 28\\
  4& 3& 12& 15& 14& -100& -84& -348& -115& -80\\
  -3& 2& -15& -3& -17& -44& 142& -115& 1039& 398\\
  -2& 6& 11& 7& 20& -12& 28& -80 &398& -3314
\end{pmatrix}.
$$

With sign alternance off, our code produces the result\footnote{The unimodular transformation has
  large coefficient and cannot be printed here. }:
$$
\begin{pmatrix}
  1& 0& 0& 0& 0& 0& 0& 0& 0& 0\\
  0& 1& 0& 0& 0& 0& 0& 0& 0& 0\\
  0& 0& -1& 0& 0& 0& 0& 0& 0& 0\\
  0& 0& 0& -1& 0& 0& 0& 0& 0& 0\\
  0& 0& 0& 0& -1& 26& 0& 0& 0& 0\\
  0& 0& 0& 0& 26& 49& 128& -116& -48& 42\\
  0& 0& 0& 0& 0& 128& 1424& 2804& -400& -396\\
  0& 0& 0& 0& 0& -116& 2804& -4644& -3379& 1183\\
  0& 0& 0& 0& 0& -48& -400& -3379& -11066& 382066\\
  0& 0& 0& 0& 0& 42& -396& 1183& 382066& 162445 
\end{pmatrix}.
$$

And with sign alternance on, the result is:
$$
\begin{pmatrix}
  1& 0& 0& 0& 0& 0& 0& 0& 0& 0\\
  0& -1& 0& 0& 0& 0& 0& 0& 0& 0\\
  0& 0& 1& 0& 0& 0& 0& 0& 0& 0\\
  0& 0& 0& -1& 0& 0& 0& 0& 0& 0\\
  0& 0& 0& 0& 1& 0& 0& 0& 0& 0\\
  0& 0& 0& 0& 0& 0& 1& 0& 0& 0\\
  0& 0& 0& 0& 0& 1& 0& 0& 0& 0\\
  0& 0& 0& 0& 0& 0& 0& -334202& 2979470& -53199\\
  0& 0& 0& 0& 0& 0& 0& 2979470& 5071980& 50719326\\
  0& 0&0& 0& 0& 0& 0& -53199& 50719326& -66710417 
\end{pmatrix}.
$$

Even with such a relatively small dimension, the difference is
clear. Simon's algorithm returns a first vector of ``squared-norm'' with
absolute value $9$. By contrast, with sign alternance off, we already
get five vectors with ``squared-norm'' $\pm 1$. And, with sign alternance
on, counting the hyperbolic plane, we improve to seven vectors of
``squared-norm'' $\pm 1$. Note that the tenth-root of the absolute
value of the determinant here is close to $131$.

\subsubsection{Worst case matrices}
We now turn to the worst case example of Section~\ref{sec:heuri}.
Assembling two matrices of dimension 5, we have the following Gram:
$$
\begin{pmatrix}
  32768& 16384& 0& 0& 0& 0& 0& 0& 0& 0\\
  16384& 32768& 12288& 0& 0& 0& 0& 0& 0& 0\\
  0& 12288& 24576& 9216& 0& 0& 0& 0& 0& 0\\
  0& 0& 9216& 18432& 6912& 0& 0& 0& 0& 0\\
  0& 0& 0& 6912& 13824& 0& 0& 0& 0& 0\\
  0& 0& 0& 0& 0& -10368& -5184& 0& 0& 0\\
  0& 0& 0& 0& 0& -5184& -10368& -3888& 0& 0\\
  0& 0& 0& 0& 0& 0& -3888& -7776& -2916& 0\\
  0& 0& 0& 0& 0& 0& 0& -2916& -5832& -2187\\
  0&0& 0& 0& 0& 0& 0& 0& -2187& -4374 
\end{pmatrix}.
$$

As predicted, neither Magma's native implementation nor our code
modify this matrix. However, this is not the end of the story. To
further test Heuristic~\ref{th:heurqual},  we pick
a small random unimodular matrix, use to modify the example.
After that, Magma's implementation outputs:
$$
\begin{pmatrix}
  -4374& 2187& -2187& 2187& 2187& 0& 0& 0& 0& 0\\
  2187& -5832& 2916& -2916& -2916& 0& 0& 0& 0& 0\\
  -2187& 2916& -7776& 3888& 3888& 0& 0& 0& 0& 0\\
  2187& -2916& 3888& -10368& -5184& 0& 0& 0& 0& 0\\
  2187& -2916& 3888& -5184& -10368& 0& 0& 0& 0& 0\\
  0& 0& 0& 0& 0& 13824& -6912& -6912& 6912& -6912\\
  0& 0& 0& 0& 0& -6912& 18432& 9216& -9216& -2304\\
  0& 0& 0& 0& 0& -6912& 9216& 24576& -12288& -5376\\
  0& 0& 0& 0& 0& 6912& -9216& -12288& 32768& 9472\\
  0& 0& 0& 0& 0& -6912& -2304& -5376& 9472& 32768 
\end{pmatrix}.
$$

By constract, without the sign strategy,  our code gives:
$$
\begin{pmatrix}
  2& 5& 0& -1& 0& 0& 0& 0& 0& 0\\
  5& -4& -6& -7& 0& -3& 0& -15& 12& -3\\
  0& -6& -264& -198& 0& 156& 0& 204& -48& 156\\
  -1& -7& -198& -472& 0& 117& 0& 153& -36& 117\\
  0& 0& 0& 0& -10368& 0& 0& 0& 0& 0\\
  0& -3& 156& 117& 0& -11154& 124416& 8166& -5496& 2670\\
  0& 0& 0& 0& 0& 124416& 0&0& 435456& -311040\\
  0& -15& 204& 153& 0& 8166& 0& 2388606& -866712& 277734\\
  0& 12& -48& -36& 0& -5496& 435456& -866712& 2485536& -371832\\
  0& -3& 156& 117& 0& 2670& -311040& 277734& -371832& 6624366 
\end{pmatrix}.
$$

And finally, with the sign strategy, we have:
$$
\begin{pmatrix}
  2& 5& 0& 0& 0& 0& 0& -1& 0& -1\\
  5& -4& 69& 15& 6& -9& 0& -10& 0& -16\\
  0& 69& 78& 186& -156& 90& 288& -165& -144& -153\\
  0& 15& 186& -354& 924& 414& 288& -543& -144& -315\\
  0& 6& -156& 924& 1464& 6156& -288& 906& 144& 450\\
  0& -9& 90& 414& 6156& -12834& 45216& 24777& 18864& 13293\\
  0& 0& 288& 288& -288& 45216& 11520& 322272& -67968& -84960\\
  -1& -10& -165& -543& 906& 24777& 322272& -320704& 2524176&
  -1163146\\
  0& 0& -144& -144& 144& 18864& -67968& 
  2524176& 1195200& 11063664\\
  -1& -16& -153& -315& 450& 13293& -84960& -1163146& 11063664& 
-13101340 
\end{pmatrix}.
$$
Thus, our shortest vector of ``squared-norm'' 2, is much smaller than 
the tenth root of the determinant, that is equal to $10368$. Whereas
the output of Simon's algorithm is of the predicted magnitude.

\subsubsection{Large signature matrices}
In order to better focus on the difference between the Lov{\'a}sz
condition with absolute value and the reduction of indefinite forms,
we construct a dimension 10 matrix with 9 positive eigenvalues and a
single negative. Thus the signature is 8, the maximum for indefinite
matrices of that dimension.
Our input matrix is :
$$
\begin{pmatrix}
  369& 146& -139& 35& -35& 69& -62& 16& 83& 0\\
  146& 327& -32& -66& -59& -48& -120& 83& 26&  0\\
  -139& -32& 376& 51& -33& -163& 67& -42& 37& 0\\
  35& -66& 51& 312& -124& -52& -157& -162&  20& 0\\
  -35& -59& -33& -124& 528& 1& 98& 40& 80& 0\\
  69& -48& -163& -52& 1& 221& -56& 109& -56& 0\\
  -62& -120& 67& -157& 98& -56& 365& -9& 90& 0\\
  16& 83& -42& -162& 40& 109& -9& 188& -42& 0\\
  83& 26& 37& 20& 80& -56& 90& -42& 121& 0\\
  0& 0& 0& 0& 0& 0& 0& 0& 0& -7396 
\end{pmatrix}.
$$

Magma's implementation of Simon does very little and outputs:
$$
\begin{pmatrix}
  121& -3& -41& -15& -42& 53& -22& -19& -26& 0\\
  -3& 220& 42& -46& 5& -104& 7& 58& 0& 0\\
  -41&42& 188& -67& 63& -63& 80& 7& -56& 0\\
  -15& -46& -67& 167& 46& 6& -23& 51& 13& 0\\
  -42& 5& 63& 46& 188& -59& 26& 67& 36& 0\\
  53& -104& -63& 6& -59& 194& -32& -96& 7& 0\\
  -22& 7& 80& -23& 26& -32& 176& 66& -39& 0\\
  -19& 58& 7& 51& 67& -96& 66& 271& -64& 0\\
  -26& 0& -56& 13& 36& 7& -39& -64& 296& 0\\
  0& 0& 0& 0& 0& 0& 0& 0& 0& -7396
\end{pmatrix}.
$$

With our code, irrespective of the chosen sign strategy, we obtain:
$$
\begin{pmatrix}
  -4& 30& -2& -3& 0& 2& 3& 3& 1& 0\\
  30& 17& 120& 135& -61& -58& -253& -123& -177& -231\\
  -2& 120& 240& 242& -104& 159& -206& 33& -93& 52\\
  -3& 135& 242& 747& 299& -34& -355& 177& -180& -171\\
  0& -61& -104& 299& 597& -383& 51& 140& 120& -253\\
  2& -58& 159& -34& -383& 539& -181& 72& 32& 222\\
  3& -253& -206& -355& 51& -181& 559& -54& 139& 243\\
  3& -123& 33& 177& 140& 72& -54& 292& 78& -43\\
  1& -177& -93& -180& 120& 32& 139& 78& 487& -28\\
  0& -231& 52& -171& -253& 222& 243& -43& -28& 22041 
\end{pmatrix}.
$$
In particular, we find a short vector of ``squared-norm'' $-4$ that
Simon's algorithm missed. Interestingly, the tenth-root of the
determinant is around $224$, much larger that our shortest vector.

\section{Conclusion}
In this paper, we revisited indefinite lattice reduction and came up
with the somewhat surprinsing conclusion that it seems to be much
easier to find short vectors in this situation than in the usual case of
definite lattices. Even our strong
Heuristic~\ref{th:heurqual} does not seem to reflect the full story.

As a tentative explanation, we would like to exhibit one big
difference between the definite and indefinite case. In the definite
case, there are only finitely many good bases (for whatever notion we
want to choose). This number might be very large but it remains
finite and decreases as we put stronger restrictions on what a good
basis should be.

In the indefinite case, this is no longer true. To see that, consider
the following Gram matrix:
$$
G=\begin{pmatrix}
  1 & 0& 0 & 0\\
  0 &-1 &0 & 0\\
  0 & 0& 1&  0\\
  0 & 0 & 0 &-1
\end{pmatrix}.
$$

It is clearly a very good basis. Now form the unimodular matrix:
$$
U=\begin{pmatrix}
  1&  0&  1&  1\\
  -1&  1&  0& -1\\
  1& -1& -1&  0\\
  0 & 1 & 1 & 1
\end{pmatrix}
$$
We easily check that $\transp{U}\cdot G\cdot U=G$. Furthermore,
starting from the characteristic polynomial of $U$, we can derive the identity:
$$
(U^k)_{1,1}=((-1)^k+7+2\,k^2)/8.
$$
This shows that $U$ has infinite multiplicate order. Thus, there exists
infinitely many unimodular matrices that map $G$ to itself.
In other words, we have found infinitely many good bases of this
indefinite lattice.

Furthermore these good bases are probably all over the place. So, it makes some
sense to expect that any basis for the lattice is not too far from
some good basis. If such is the case, it is natural to assume that
lattice reduction becomes easier.

\subsection*{Future directions of research} Since its invention, LLL
has been the object of many studies and improvements. We can only hope
that future research on indefinite lattice reduction will be as
active.

On the application front, having a better algorithm opens new
hope. For example, the number theoretic applications proposed
in~\cite{IvanyosSz96,Simon05} can probably be improved upon by the new
ability to find shorter vectors. Lattice
reduction also has many applications in complexity and cryptography
and it would be nice to find similar uses of indefinite lattice
reduction.

On the algorithmic front, the study of LLL has been very rich. Faster
algorithms based on approximate floating point computation, refined
complexity analyses, stronger algorithms with better approximation
factors. Lattice reduction has also been extended to solve other
problems such as the closest vector problem. Can similar
improvements and extensions also be applied to indefinite lattices?

\bibliographystyle{amsalpha}
\bibliography{indef}

\vspace{1cm}
\appendix
\section{Signature under unimodular transformations and GSO}
\begin{theorem} \label{th:unimodEigen}
If $G$ is an invertible real symmetric matrix and $U$ is a unimodular
matrix (of the same dimension as $G$), then $\transp{U}\cdot G \cdot
U$ has the same number of positive and negative eigenvalues as $G$. As
an immediate consequence, $G$  and $\transp{U}\cdot G \cdot U$ have
the same signature.
\end{theorem}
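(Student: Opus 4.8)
The plan is to recognize the statement as Sylvester's law of inertia. Since $U$ is unimodular it is, in particular, invertible over $\RR$, so $\transp{U}\,G\,U$ and $G$ are congruent real symmetric matrices, and I will show that congruent symmetric matrices have the same number of positive and the same number of negative eigenvalues. Integrality of $U$ plays no role here --- only invertibility --- so I would state the underlying lemma at that level of generality.

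The key ingredient is a basis-free variational description of the eigenvalue counts. For an $n\times n$ real symmetric matrix $A$, let $n^{+}(A)$ and $n^{-}(A)$ denote the number of positive, resp. negative, eigenvalues counted with multiplicity. First I would prove
$$
n^{+}(A)=\max\{\dim W : \transp{\vec{x}}\,A\,\vec{x}>0\ \text{ for every }\ \vec{x}\in W\setminus\{\vec{0}\}\}.
$$
Taking $W$ to be the span of the eigenvectors of $A$ with positive eigenvalue and expanding $\transp{\vec{x}}\,A\,\vec{x}$ in the eigenbasis (spectral theorem) shows that this maximum is at least $n^{+}(A)$. For the reverse bound, the span $W'$ of the eigenvectors with non-positive eigenvalue has dimension $n-n^{+}(A)$, so any subspace $W$ with $\dim W>n^{+}(A)$ meets $W'$ in a nonzero vector $\vec{x}$, and then $\transp{\vec{x}}\,A\,\vec{x}\leq 0$, contradicting the defining property of $W$.

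Next I would combine this with the identity $\transp{\vec{x}}\,(\transp{U}\,G\,U)\,\vec{x}=\transp{(U\vec{x})}\,G\,(U\vec{x})$ and the fact that $\vec{x}\mapsto U\vec{x}$ is a linear automorphism of $\RR^{n}$: it carries each subspace on which $\transp{U}\,G\,U$ is positive definite to a subspace of the same dimension on which $G$ is positive definite, and $U^{-1}$ carries them back. Hence the two maxima coincide and $n^{+}(\transp{U}\,G\,U)=n^{+}(G)$. Applying this with $-G$ in place of $G$, and using $-(\transp{U}\,G\,U)=\transp{U}\,(-G)\,U$, gives $n^{-}(\transp{U}\,G\,U)=n^{-}(G)$; as both matrices are $n\times n$, the numbers of zero eigenvalues agree as well. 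The equality of signatures is then immediate from $\sig=|\dim V_{+}-\dim V_{-}|$.

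I do not expect a genuine obstacle; the only point that needs care is the proof of the variational formula, specifically the dimension-count half, which is nonetheless routine. I would remark that a continuity argument is also available --- the eigenvalues of $\transp{U_{t}}\,G\,U_{t}$ vary continuously and never vanish along a path $U_{t}$ of invertible matrices, so the counts are locally, hence globally, constant along such a path --- but since $GL_{n}(\RR)$ is disconnected this forces one to treat the two possible signs of $\det U$ separately, whereas the variational argument dispatches both at once.
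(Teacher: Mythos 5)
Your proof is correct, but it follows a genuinely different route from the paper's. You prove Sylvester's law of inertia directly via the variational characterization $n^{+}(A)=\max\{\dim W:\ \transp{\vec{x}}A\vec{x}>0 \text{ on } W\setminus\{\vec{0}\}\}$, and both halves of that formula (the eigenbasis lower bound and the dimension-count upper bound) as well as the transport of subspaces under $\vec{x}\mapsto U\vec{x}$ are handled correctly; applying the result to $-G$ for the negative count is also fine. The paper instead reduces to generators of the unimodular group (permutations and elementary translations $U_\tau$), observes that $G$ and $\transp{U_\tau}\cdot G\cdot U_\tau$ share the principal minor $G'$ obtained by deleting the first row and column, and invokes Cauchy's interlacing theorem, with a final determinant-sign argument to settle the one eigenvalue whose sign interlacing leaves ambiguous. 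Your argument buys more generality with less case analysis: it needs only that $U$ is invertible over $\RR$, exactly as you note, and it avoids both the generation fact and the sign bookkeeping. The paper's interlacing machinery has the compensating advantage that it is reused verbatim (with real, non-integral $\tau$) to prove the companion statement Theorem~\ref{th:GSOEigen} about the generalized GSO, so within the paper it does double duty; your variational lemma would serve that purpose equally well, though, since it too never uses integrality.
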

\begin{proof}
  We first recall that unimodular matrices are generated by permutations and
  elementary translations, i.e. matrices of the form:
  $$U_\tau=
  \begin{pmatrix}
    1& 0 & 0 &\cdots &0\\
    \tau& 1 & 0 &\cdots &0\\
    0 & 0 &1 &\cdots &0\\
    \vdots &\vdots &\vdots &\ddots &\vdots\\
    0 & 0 &0 &\cdots &1
  \end{pmatrix}.
  $$
  This elementary translation has a single non-zero
  off-diagonal entry, and it correspond to adding $\tau$ times the
  second basis vector to the first one.

  Thus, it suffices to prove the theorem when $U$ is either a
  permutation or an elementary translation. Of course, applying a
  permutation to $G$ does not change its eigenvalues,thus  we only need to
  consider the case where $U$ is an elementary translation $U_\tau$.

  Decompose $G$ (of dimension $n$) into blocks in the following way:
  $$
  G=  \begin{pmatrix}
    g_{11} & v\\
    \transp{v}& G'
  \end{pmatrix},
  $$
  where $g_{11}$ is the top-left entry of $G$, $v$ the rest of the
  first row and $G'$ the $(n-1)\times (n-1)$ minor obtained by
  removing the first row and column of $G$.

  An easy computation shows that $\transp{U_\tau }\cdot G\cdot U_\tau$
  is of the form:
    $$
  \begin{pmatrix}
    g'_{11} & v'\\
    \transp{v'}& G'
  \end{pmatrix},
  $$
Most importantly, $G$ and $\transp{U_\tau }\cdot G\cdot U_\tau$ share
the minor $G'$.

Let $f_1\leq f_2\leq \cdot \leq f_{n-1}$ denote the eigenvalues of
$G'$, $e_1\leq e_2\leq \cdot \leq e_{n}$  denote the eigenvalues of
$G$ and $e'_1\leq e'_2\leq \cdot \leq e'_{n}$ the eigenvalues of
$\transp{U_\tau }\cdot G\cdot U_\tau$.

Thanks to Cauchy's interlacing theorem, e.g. see~\cite{denton2022eigenvectors}, we know that:
\begin{align*}
& e_1\leq f_1\leq e_2\leq \cdots\leq e_{n-1}\leq f_{n-1}\leq e_n\quad 
                 \mbox{and} \\
& e'_1\leq f_1\leq e'_2\leq \cdots\leq e'_{n-1}\leq f_{n-1}\leq e'_n.
\end{align*}

We now distinguish two cases. First, if there is an index $i$ such
that $f_i=0$ then we have $e_1<0$, \dots, $e_i<0$ and $e_{i+1}>0$,
\dots, $e_n>0$ (and likewise for $e'$). In that case, $e$ and $e'$
contains $i$ negative eigenvalues and $n-i$ positive ones. Thus, the
conclusion follows.

In the second case, we let $i$ denote the first positive $f_i$, it covers the three following subcases:
\begin{itemize}
\item If $f_1>0$, we set $i=1.$
\item if $f_{i-1}<0<f_{i},$ we directly have $i.$
\item If $f_{n-1}<0$, we set $i=n$.
\end{itemize}
Then, we see that $e_j<0$ when $j\leq i-1$ and $e_j>0$ when $j\geq
i+1$ and similarly for $e'$. However, the signs of $e_i$ and $e'_i$
are not directly known. Still, we know that:
$$
\prod_{j=1}^n e_j=\det{G}=\det{\left(\transp{U_\tau }\cdot G\cdot U_\tau\right)}=\prod_{j=1}^n e'_j.
$$
As a consequence, $e_i$ and $e'_i$ also have the same sign. This concludes
the proof in the second case.
\end{proof}

\begin{theorem}  \label{th:GSOEigen}
Let $G$ is an invertible real symmetric matrix with generalized GSO of
the form:
$$
G^{*}=\transp{T}\cdot G\cdot T,
$$
where $T$ is lower triangular with $1$s on the diagonal and $G^{*}$ is
a diagonal join of either pure diagonal entries $g^{*}_{ii}$ or small
$2\times 2$ hyperbolic planes of the form
$\begin{pmatrix}
  0 & g^{*}_{i,i+1}\\
  g^{*}_{i,i+1} & 0\\
\end{pmatrix}.
$

Denote by $n^{+}$ and $n^{-}$ the number of positive and negative
eigenvalues of $G$, by $n^{*}_{\mbox{pos}}$, $n^{*}_{\mbox{neg}}$,
the number of positive and negative diagonal elements in $G^{*}$ and
by $n^{*}_{\mbox{hyp}}$ the number of hyperbolic planes in $G^{*}$.
Then:
\begin{align*}
  &n^{+}= n^{*}_{\mbox{pos}}+ n^{*}_{\mbox{hyp}}\quad\mbox{and}\\
  &n^{-}= n^{*}_{\mbox{neg}}+ n^{*}_{\mbox{hyp}}.
\end{align*}
\end{theorem}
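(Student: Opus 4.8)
The plan is to split the statement into two independent facts: that the congruence $G \mapsto \transp{T}\cdot G\cdot T$ preserves the number of positive eigenvalues and the number of negative eigenvalues, and that for the block-diagonal matrix $G^{*}$ these numbers can be read off block by block.

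For the first fact, I would observe that $T$, being lower triangular with $1$'s on the diagonal, is a product of permutation matrices and elementary translation matrices $U_\tau$ of the kind used in the proof of Theorem~\ref{th:unimodEigen}, now allowing $\tau$ to be an arbitrary real number rather than an integer. The Cauchy-interlacing argument of that proof makes no use of integrality of $\tau$, so it shows that each such elementary congruence leaves the number of positive, and the number of negative, eigenvalues unchanged; composing the factors, $G$ and $G^{*}=\transp{T}\cdot G\cdot T$ have the same number $n^{+}$ of positive eigenvalues and the same number $n^{-}$ of negative ones. (A reader who prefers may simply invoke Sylvester's law of inertia here.) Since $G$ is invertible and congruence preserves rank, $G^{*}$ is invertible too, hence has no zero eigenvalue.

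For the second fact, since $G^{*}$ is a diagonal join of blocks, its eigenvalues, counted with multiplicity, are exactly those of its blocks. A $1\times 1$ block contributes the eigenvalue $g^{*}_{ii}$, nonzero by invertibility, so it contributes $1$ to $n^{+}$ precisely when $g^{*}_{ii}>0$ and $1$ to $n^{-}$ precisely when $g^{*}_{ii}<0$. A hyperbolic block $\begin{pmatrix} 0 & g^{*}_{i,i+1} \\ g^{*}_{i,i+1} & 0 \end{pmatrix}$ with $g^{*}_{i,i+1}\neq 0$ has characteristic polynomial $\lambda^{2}-(g^{*}_{i,i+1})^{2}$, hence eigenvalues $\pm|g^{*}_{i,i+1}|$, contributing exactly one positive and one negative eigenvalue. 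Adding up the contributions of the $n^{*}_{\mbox{pos}}$ positive diagonal blocks, the $n^{*}_{\mbox{neg}}$ negative diagonal blocks, and the $n^{*}_{\mbox{hyp}}$ hyperbolic blocks yields $n^{+}=n^{*}_{\mbox{pos}}+n^{*}_{\mbox{hyp}}$ and $n^{-}=n^{*}_{\mbox{neg}}+n^{*}_{\mbox{hyp}}$, as claimed. The only genuinely delicate point is the first step: $T$ need not be integer-unimodular, so Theorem~\ref{th:unimodEigen} does not apply off the shelf and one must either re-examine its proof with real transvections or appeal directly to Sylvester's law of inertia; the rest is just bookkeeping on $2\times 2$ blocks.
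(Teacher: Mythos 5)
Your proposal is correct and follows essentially the same route as the paper: decompose $T$ into elementary translations $U_\tau$ with real $\tau$ (or invoke Sylvester's law of inertia) to see that the congruence preserves the number of positive and negative eigenvalues, then count eigenvalues block by block, each hyperbolic plane contributing one of each sign. Your version just spells out the bookkeeping in more detail than the paper does.
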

\begin{proof}
  The proof is in two steps. First, note that $T$ can be decomposed
  into a product of matrices of the form $U_{\tau}$ as in the previous
  proof, except that $\tau$ is not necessarily an integer here. Thus,
  $G$ and $G^{*}$ have the same number of positive and negative
  eigenvalues.

  Second, we can remark that each hyperbolic plane contributes one
  positive and one negative eigenvalue. This concludes the proof.
\end{proof}

\section{Post-processing to remove some hyperbolic planes}
Once a reduced basis is reached, the presence of hyperbolic planes can
be an inconvenience. We give here useful transformations that can
remove them with the help of an extra vector outside of the plane.

To specify these transformation, we just need to consider
three-dimensional Gram matrices. We call the corresponding vectors
$\vec{u},$ $\vec{v},$ and $\vec{w}$ in the  rest of this Appendix.
Up to scaling, the Gram matrix must be of the form:
$$
G_3=\begin{pmatrix}
  1 &0 &0 \\
  0 & 0& \alpha\\
  0 &\alpha&0
\end{pmatrix}.
$$
We are especially interested with the case where $|\alpha$ is close to
$1$. Remark that we can ensure that $\alpha>0$ by replacing $\vec{w}$
by $-\vec{w}$ if needed. We let $\epsilon_{\alpha}=1-\alpha$.

Now, consider the unimodular transformation:
$$
U=\begin{pmatrix}
  1 & 1& 1\\
  1 & 1 & 0\\
  -1&0 & -1
\end{pmatrix}.
$$

It corresponds to the basis $(\vec{u}+\vec{v}-\vec{w},\vec{u}+\vec{v}, \vec{u}-\vec{w}).$ The Gram matrix after the transformation become:
$$G_U=
\begin{pmatrix}
  1-2\alpha & 1-\alpha& 1-\alpha\\
  1-\alpha & 1 & 1-\alpha\\
  1-\alpha& 1-\alpha & 1
\end{pmatrix}=
\begin{pmatrix}
  -1+2 \epsilon_{\alpha} & \epsilon_{\alpha}& \epsilon_{\alpha}\\
  \epsilon_{\alpha}& 1 & \epsilon_{\alpha}\\
  \epsilon_{\alpha}& \epsilon_{\alpha} & 1
\end{pmatrix}.
$$

In particular, when $\alpha=1$,  we find the diagonal matrix:
$$G_U=
\begin{pmatrix}
  -1& 0 & 0\\
 0&1&0\\
  0&0&1
\end{pmatrix}.
$$

\section{Magma code}\label{sec:magmacode}

\lstinputlisting[language=Delphi,morekeywords={cat,elif,le,lt,ge,gt,ne,true,false,error,return}]{MagmaCode.txt}
\end{document}